\newtheorem{theo}{Theorem}
\newtheorem{prop}{Proposition}
\newtheorem{cor}{Corollary}
\newtheorem{lem}{Lemma}
\theoremstyle{definition}
\newtheorem{exa}{Example}
\newtheorem{rem}{Remark}
\newcommand{\U}{{\mathbb U}}
\newcommand{\F}{{\mathbb F}}
\newcommand{\Z}{{\mathbb Z}}
\newcommand{\C}{{\mathbb C}}
\newcommand{\cC}{{\mathcal C}}
\newcommand{\cG}{{\mathcal G}}
\newcommand{\cH}{{\mathcal H}}
\newcommand{\cP}{{\mathcal P}}
\newcommand{\cO}{{\mathcal O}}
\newcommand{\cL}{{\mathcal L}}
\newcommand{\cI}{{\mathcal I}}
\newcommand{\cJ}{{\mathcal J}}
\renewcommand{\cH}{{\mathcal H}}
\newcommand{\cS}{{\mathcal S}}
\newcommand{\cN}{{\mathcal N}}
\newcommand{\cV}{{\mathcal V}}
\newcommand{\cW}{{\mathcal W}}
\newcommand{\cX}{{\mathcal X}}
\newcommand{\cZ}{{\mathcal Z}}
\newcommand{\bP}{{\mathbf P}}
\newcommand{\bH}{{\mathbf H}}
\newcommand{\bZ}{{\mathbf Z}}
\newcommand{\rr}{{\widetilde{r}}}
\newcommand{\wcI}{\widetilde{\cI}}
\newcommand{\HI}{\bH_{\cI}}
\newcommand{\PI}{\bP_{\scriptsize\cI}}
\newcommand{\gf}{\mbox{$\bG_{\scriptsize{\bF}}$}}
\newcommand{\wb}{\bw_{\scriptsize{{\bf b}}}}
\newcommand{\HW}{\cH\cW_{\!N}}
\newcommand{\Fm}{\F_2^{2m}}
\newcommand{\wHI}{\widetilde{\bH_{\cI}}}
\newcommand{\sbt}{\raisebox{.2ex}{\mbox{$\,\scriptscriptstyle\bullet\,$}}}
\newcommand{\Sp}{\mathrm{Sp}}
\newcommand{\rk}{\mbox{${\rm rank\,}$}}
\newcommand{\sperp}{{{\perp_{\rm{s}}}}}
\newcommand{\cl}{\mbox{${\rm Cliff\!}$}}
\newcommand{\rs}{\mbox{\rm rs}\,}
\newcommand{\cs}{\mbox{\rm cs}\,}
\newcommand{\spann}{\mbox{\rm span}}
\newcommand{\wt}{{\rm wt}}
\newcommand{\chor}{\mbox{${\rm d}_{\rm c}$}}
\newcommand{\T}{\mbox{$^{\sf T}$}}
\newcommand{\inners}[2]{\mbox{$\langle{\,{#1}\,}|\,{#2}\,\rangle_{\rm s}$}}
\newcommand{\ov}[1]{\mbox{$\overline{#1}$}}
\newcommand{\FDP}{\bF_D(\bP)}
\newcommand{\FUS}{\bF_U(\bS)}
\newcommand{\GL}{\mathrm{GL}}
\newcommand{\DG}{\mathrm{DG}}
\newcommand{\Sym}{\mathrm{Sym}}
\newcommand{\diag}{\textup{diag}}
\newcommand{\bb}[1]{\mbox{\rm $\textbf{#1}$}}
\newcommand{\chirp}{\mbox{$\rm{\cV_{\mathrm{BC}}}$}}
\newcommand{\bssc}{\mbox{$\rm{\cV_{\mathrm{BSSC}}}$}}
\newcommand{\Imr}{\bI_{m|r}}
\newcommand{\Imrr}{\bI_{m|-r}}
\newcommand{\bA}{{\mathbf A}}
\newcommand{\bB}{{\mathbf B}}
\newcommand{\bC}{{\mathbf C}}
\newcommand{\bD}{{\mathbf D}}
\newcommand{\bE}{{\mathbf E}}
\newcommand{\bF}{{\mathbf F}}
\newcommand{\bG}{{\mathbf G}}
\newcommand{\bI}{{\mathbf I}}
\newcommand{\bM}{{\mathbf M}}
\newcommand{\bN}{{\mathbf N}}
\newcommand{\bS}{{\mathbf S}}
\newcommand{\bU}{{\mathbf U}}
\newcommand{\bX}{{\mathbf X}}
\newcommand{\ba}{{\mathbf a}}
\newcommand{\bc}{{\mathbf c}}
\newcommand{\bd}{{\mathbf d}}
\newcommand{\be}{{\mathbf e}}
\newcommand{\bn}{{\mathbf n}}
\newcommand{\bs}{{\mathbf s}}
\newcommand{\bu}{{\mathbf u}}
\newcommand{\bv}{{\mathbf v}}
\newcommand{\bw}{{\mathbf w}}
\newcommand{\bx}{{\mathbf x}}
\newcommand{\by}{{\mathbf y}}
\newcommand{\bz}{{\mathbf z}}
\newcommand{\twomat}[2]{\mbox{$\left[\!\!\begin{array}{cc}{#1}&{#2}\end{array}\!\!\right]$}}
\newcommand{\fourmat}[4]{\mbox{$\left[\!\!\begin{array}{cc}{#1}&{#2}\\{#3}&{#4}\end{array}\!\!\right]$}}
\newcommand{\twomatv}[2]{\mbox{$\left[\!\!\begin{array}{c}{#1}\\{#2}\end{array}\!\!\right]$}}
\newcounter{alp}
\newcounter{ara}
\newcounter{rom}
\newif\ifcomment
\date{}
\begin{document}
\title{Binary Subspace Chirps and their Applications}
\title{Binary Subspace Chirps, Quantum Computation, and Random Access}
\title{\Huge \textbf{Binary Subspace Chirps}}
\author[*]{Tefjol Pllaha}
\author[*]{Olav Tirkkonen}
\author[$\dagger$]{Robert Calderbank}
\affil[*]{\small{Department of Communications and Networking, Aalto University, Finland}}
\affil[$\dagger$]{Department of Electrical and Computer Engineering, Duke University, NC, USA}
\affil[ ]{Emails: \{tefjol.pllaha, olav.tirkkonen\}@aalto.fi, robert.calderbank@duke.edu}

	
    

\maketitle

\begin{abstract}
We describe in details the interplay between binary symplectic geometry and quantum computation, with the ultimate goal of constructing highly structured codebooks. The Binary Chirps (BCs)
are Complex Grassmannian Lines in $N = 2^m$ dimensions used in deterministic compressed sensing and random/unsourced multiple access in wireless networks.
Their entries are fourth roots of unity and can be described in terms of second order Reed-Muller codes.
The Binary Subspace Chirps (BSSCs) are a unique collection of BCs of \emph{ranks} ranging from $r=0$ to $r = m$, embedded in $N$ dimensions according to an on-off pattern determined by a rank $r$ binary subspace. This yields a codebook that is asymptotically 2.38 times larger than the codebook of BCs, has the same minimum chordal distance as the codebook of BCs, and the alphabet is minimally extended from $\{\pm 1,\pm i\}$ to $\{\pm 1,\pm i, 0\}$. 
Equivalently, we show that BSSCs are stabilizer states, and we characterize them as columns of a well-controlled collection of Clifford matrices.
By construction, the BSSCs inherit all the properties of BCs, which in turn makes them good candidates for a variety of applications.
For applications in wireless communication, we use the rich algebraic structure of BSSCs to construct a low complexity decoding algorithm that is reliable against Gaussian noise.
In simulations,  BSSCs exhibit an error probability comparable or slightly lower than BCs, both for single-user and multi-user transmissions.
\end{abstract}


\section{Introduction}

Codebooks of complex projective (Grassmann) lines, or tight frames,
have found application in multiple problems of interest for communications
and information processing, such as code division multiple access
sequence design~\cite{Viswanath1999}, precoding for multi-antenna
transmissions~\cite{Love2003} and network coding~\cite{Kotter2008}.
Contemporary interest in such codes arise, e.g., from deterministic
compressed sensing~\cite{DeVore2007,HCS08,Li2014,Wang2018,TC18}, virtual
full-duplex communication~\cite{GZ10}, mmWave
communication~\cite{Tsai2018}, and random
access~\cite{Calderbank2019}.

One of the challenges/promises of 5G wireless communication is to enable massive machine-type communications (mMTC) in the Internet of Things (IoT), in which a massive number of low-cost devices sporadically and randomly access the network~\cite{Polyanskiy17}. 
In this scenario, users are assigned a unique \emph{signature} sequence which they transmit whenever active~\cite{UEP15}. 
A twin use-case is unsourced multiple access, where a large number of messages is transmitted infrequently.
Polyanskiy~\cite{Polyanskiy17} proposed a framework in which communication occurs in blocks of $N$ channel uses, and the task of a receiver is to identify correctly $L$ active users (messages) out of $2^B$, with one regime of interest being $N = 30, 000, L = 250$, and $B = 100$. 
Ever since its introduction, there have been several follow-up works~\cite{Polyanskiy19,Polyanskiy19-2,10.1007/978-3-030-01168-0_15,Calderbank2019,Narayanan}, extensions to a massive MIMO scenario~\cite{alex2019massive} where the base station has a very large number of antennas, and a discussion on the fundamental limits on what is possible~\cite{kowshik2019fundamental}.

Given the massive number of to-be-supported (to-be-encoded) users (messages), the design criteria are fundamentally different and one simply cannot rely on classical multiple-access channel solutions. 
For instance, interference is unavoidable since it is impossible to have orthogonal signatures/codewords. 
Additionally, given that there is a small number of active user, the interference is limited.
Thus, the challenge becomes to design highly structured codebooks of large cardinality along with a reliable and low-complexity decoding algorithm. \

Codebooks of Binary Chirps (BCs)~\cite{HCS08,AHSC09} provide such highly structured Grassmannian line codebook in $N=2^m$ dimensions with additional desirable properties.
All entries come from a small alphabet, being a fourth root of unity, and can be described in terms of second order Reed-Muller (RM) codes. 
RM codes have the fascinating property that a Walsh-Hadamard measurement cuts in half the solution space. This yields a single-user decoding complexity of ${\cal O}(N\log^2 N)$, coming from the Walsh-Hadamard transform and number of required measurements. 
Additionally, the number of codewords is
reasonably large, growing as $2^{m(m+3)/2} = \sqrt{N}^{3+\log_2 N}$, while the minimum \emph{chordal distance} is $1/\sqrt{2}$.

We expand the BC codebook to the codebook of Binary Subspace Chirps (BSSCs) by collectively considering all BCs in $S = 2^r$ dimensions, $r = 0,\ldots,m$, in $N = 2^m$ dimensions. 
That is, given a BC in $S = 2^r$ dimensions, we embed it in $N = 2^m$ dimensions via a unique \emph{on-off pattern} determined by a rank $r$ binary subspace. 
Thus, a BSSC is characterized by a \emph{sparsity} $r$, a BC part parametrized by a binary symmetric matrix $\bS_r \in \Sym(r;2)$ and a binary vector ${\bf b}\in \F_2^m$, and a unique on-off pattern parametrized by a rank $r$ binary subspace $H\in \cG(m,r;2)$; see~\eqref{e-bssc} for the formal definition. 
The codebook of BSSCs inherits all the desirable properties of BCs, and in addition, it has asymptotically about 2.384 more codewords. 
Thus, an active device with a rank $r$ signature will transmit $\alpha/\sqrt{2^r}$, $\alpha\in \{\pm 1,\pm i\}$ during time slots determined by the rank $r$ subspace $H$, and it will be silent otherwise. 
This resembles the model of~\cite{GZ10}, in which active devices can also be used (to listen) as receivers during the off-slots.

Given the structure of BSSCs, a unified rank, on-off pattern, and BCs part (in this order) estimation technique is needed. In~\cite{TC19}, a reliable on-off pattern detection was proposed, which made use of a Weyl-type transform~\cite{QTCS16} on $m$ qubit diagonal Pauli matrices; see~\eqref{e-abs}. The algorithm can be described with the common language of symplectic geometry and quantum computation. 
The key insight here is to view BSSCs as common eigenvectors of maximal sets of commuting Pauli matrices, commonly referred as \emph{stabilizer groups}. 
Indeed, we show that BSSCs are nothing else but \emph{stabilizer states}~\cite{DM03}, and their sparsity is determined by the diagonal portion of the corresponding stabilizer group; see Corollaries~\ref{C-states} and~\ref{C-SZN}.
We also show that each BSSC is a column of a unique Clifford matrix~\eqref{e-eqq2}, which itself is the common eigenspace of a unique stabilizer group~\eqref{e-ES1}; see also Theorem~\ref{T-ES}. The interplay between the binary world and the complex world is depicted in Figure~\ref{Fig-Struct}.
\begin{figure}
\begin{center}
\includegraphics[width = 0.45\textwidth]{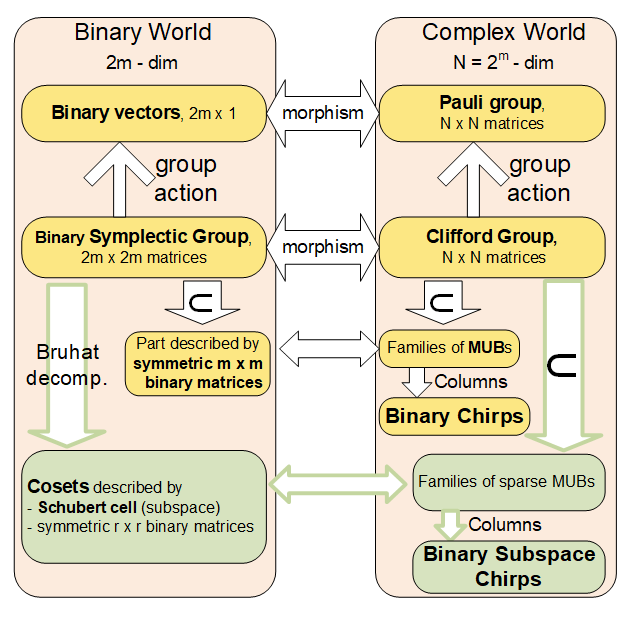}
\caption{Interplay of binary world and complex world. Prior art is depicted in yellow. The contributions of this paper are depicted in green. See also~\cite{PTC20,TC19}.} 
\label{Fig-Struct}
\end{center}
\end{figure}

Making use of these structural results, the on-off pattern detection of~\cite{TC19} can be generalized to recover the BC part of the BSSC, this time by using the Weyl-type transform on the off-diagonal part of the corresponding stabilizer group. 
This yields a single-user BSSC reconstruction as described in Algorithm~\ref{alg}. 
In~\cite{PTC20}, we added Orthogonal Matching Pursuit (OMP) to obtain a multi-user BSSCs reconstruction (see Algorithm~\ref{alg1}) with reliable performance when there is a small number of active users. 
As the number of active users increases, so does the interference, which has a quite destructive effect on the on-off pattern. However, state-of-the-art solutions for BCs~\cite{TC18,Narayanan,massiveGuo} such as slotting and patching, can be used to reduce the interference. 
Preliminary simulations show that BSSCs exhibit a lower error probability than BCs. This is because BSSCs have fewer closest neighbors on average than BCs. In addition, BSSCs are uniformly distributed over the sphere, which makes them optimal when dealing with Gaussian noise.

Throughout, the decoding complexity is kept at bay from the underlying symplectic geometry. 
The sparsity, the BC part, and the on-off pattern of a BSSC can be described in terms of the Bruhat decomposition~\eqref{e-Bruhat1} of a symplectic matrix. 
Indeed, the unique Clifford matrix~\eqref{e-eqq2} of which a BSSC is a column, is parametrized by a coset representative~\eqref{e-generic} as described in Lemma~\ref{L-RightCoset}. 
In turn, such coset representative determines a unique stabilizer group~\eqref{e-ES1}. We use this interplay to reconstruct a BSSC by reconstructing the stabilizer group that stabilizes the given BSSC. 
This alone reduces the complexity from $\cO(N^2)$ to $\cO(N\log_2 N)$.

The paper is organized as follows.
In Section~\ref{Sec-Preliminaries} we review the basics of binary symplectic geometry and quantum computation. In order to obtain a unique parametrization of BSSCs, we use Schubert cells and the Bruhat decomposition of the symplectic group. 
In Section~\ref{Sec-Cliff} we lift the Bruhat decomposition of the symplectic group to obtain a decomposition of the Clifford group. Additionally, we parametrize those Clifford matrices whose columns are BSSCs.
In Section~\ref{Sec-BSSC} we give the formal definition of BSSCs, along with their algebraic and geometric properties. 
In Sections~\ref{Sec-Recon} and~\ref{Sec-multiBSSC} we present reliable low complexity decoding algorithms, and discuss simulation results. We end the paper with some conclusions and directions future research.

\subsection{Conventions}
All vectors, binary or complex, will be columns. 
$\F_2$ denotes the binary field, $\GL(m;2)$ denotes the group of binary $m\times m$ invertible matrices, and $\Sym(m;2)$ denotes the group of binary $m\times m$ symmetric matrices. We will denote matrices (resp., vectors) with upper case (resp., lower case) bold letters. 
$\bA\T$ will denote the transpose and $\bA^{\!-\sf T}$ will denote the inverse transposed. 
$\cs (\bA)$ and $\rs (\bA)$ will denote the column space and the row space of $\bA$ respectively. 
Since all our vectors are columns, we will typically deal with column spaces, except when we work with notions from quantum computation, where row spaces are customary. 
$\bI_m$ will denote the $m\times m$ matrix (complex or binary). $\cG(m,r;2) \cong \GL(m;2)/\GL(r;2)$ denotes the binary
Grassmannian, that is, the set of all $r$-dimensional subspaces of $\F_2^m$. 
$\U(N)$ denotes the set of unitary $N\times N$ complex matrices and $\dagger$ will denote the conjugate transpose of a matrix.


\section{Preliminaries} \label{Sec-Preliminaries}
In this section we will introduce all preliminary notions needed for navigating the connection between the $2m$ dimensional binary world and the $2^m$ dimensional complex world, as depicted in Figure~\ref{Fig-Struct}. 
The primary bridge used here is the well-known homomorphism~\eqref{e-Phi} and the Bruhat decomposition of the symplectic group. We focus on cosets of the symplectic group modulo the semidirect product $\GL(m;2)\rtimes \Sym(m;2)$. 
These cosets are characterized by a rank $r=0,\ldots,m$ and a binary subspace $H\in \cG(m,r;2)$, which we will think of as the column space of an $m\times r$ binary matrix in column reduced echelon form. 
We will use Schubert cells as a formal and systematic approach. This also provides a framework for describing well-known facts from binary symplectic geometry (e.g., Remark~\ref{R-states}). Finally, Subsection~\ref{Sec-HW} discusses common notions from quantum computation.
\subsection{Schubert Cells}\label{S-SC}
Here we discuss the Schubert decomposition of the Grassmannian $\cG(m,r;2)$ with the respect to the standard \emph{flag} 
\begin{equation}
    \{\mathbf{0}\} = V_0 \subset V_1 \subset \cdots \subset V_m,
\end{equation}
where $V_i = \spann\{\be_1,\ldots, \be_i\}$ and $\{\be_1,\ldots, \be_m\}$ is the standard basis of $\F_2^m$. 
Fix a set of $r$ indices $\cI = \{i_1,\ldots, i_r\} \subset \{1,\ldots, m\}$, which, without loss of generality, we assume to be in increasing order. 
The \emph{Schubert cell} $\cC_{\cI}$ is the set of all $m\times r$ matrices that have $1$ in \emph{leading} positions $(i_j,j)$, $0$ on the left, right, and above each leading position, and every other entry is free. This is simply the set of all binary matrices in \emph{column reduced echelon form} with leading positions $\cI$.
By counting the number of free entries in each column one concludes that 
\begin{equation}
    \dim \cC_{\cI} = \sum_{j=1}^r (m - i_j) - (r - j).
\end{equation}

Fix $H \in \cG(m,r;2)$, and think of it as the column space of a $m\times r$ matrix $\bH$. After column operations, it will belong to some cell $\cC_{\cI}$, and to emphasize this fact, we will denote it as $\HI$. 
Schubert cells have a well-known duality theory which we outline next. 
Let $\widetilde{\bH_{\cI}}$ be such that $(\HI)\T \wHI = 0$. 
Of course $\cs (\wHI)\in \cG(m,m-r;2)$. 
Let $\widetilde{\cI}:= \{1,\ldots,m\}\setminus \cI$ and put $\widetilde{\cC_{\cI}}:= \{\wHI\mid \HI \in \cC_{\cI}\}$.
There is a bijection between $\{\widetilde{\cC_{\cI}}\}_{|\cI| = r}$ and $\{\cC_{\widetilde{\cJ}}\}_{|\cJ| = r}$, realized by reverting the rows and columns of $\wHI$ and identifying $\widetilde{\cJ} = \{i_1,\ldots,i_{m-r}\}$ with $\widehat{\cJ}:=\{m-i_{1}+1,\ldots,m-i_{m-r}+1\}$.
With this identification, we will denote $\bH_{\widetilde{\cI}}$ the unique element of cell $\cC_{\widetilde{\cI}}$ that is equivalent with $\wHI$, obtained by reverting the rows and columns of $\wHI$: 
\begin{equation}
\bH_{\widetilde{\cI}} = \bP_{{\rm ad},m}\wHI \bP_{{\rm ad},m-r},
\end{equation}
where $\bP_{\rm ad}$ is the antidiagonal matrix in respective dimensions.

Each cell has a distinguished element: $\bI_{\cI}\in \cC_{\cI}$ will denote the identity matrix $\bI_m$ restricted to $\cI$, that is, the unique element in $\cC_\cI$ that has all the free entries 0. 
Note that $\bI_{\cI}$ has as $j$th column the $i_j$th column of $\bI_m$, and thus its non-zero rows form $\bI_r$. In particular if $|\cI| =m$ then $\bI_{\cI} = \bI_m$. 
We also have $\bI_{\widetilde{\cI}}\in \cC_{\wcI}$. With this notation one easily verifies that 
\begin{equation}
\begin{array}{ccc}\label{e-PI1}
(\bI_{\cI})\T\HI = \bI_r, & (\bI_{\cI})\T\bI_{\wcI} = 0, & (\wHI)\T\,\bI_{\wcI} = \bI_{m-r}.\\
\end{array}
\end{equation} 
In addition, $\HI$ can be completed to an invertible matrix
\begin{equation}\label{e-PI}
    \PI:=\twomat{\HI}{\bI_{\wcI}}\hspace{.0001 in}\in \GL(m;2).
\end{equation}
Note that when $\bI_{\cI}$ is completed to an invertible matrix it gives rise to a permutation matrix.
Next,~\eqref{e-PI1} along with the default equality  $(\HI)\T \wHI = 0$ implies that
\begin{equation}\label{e-PIT}
    \PI^{-\sf T} = \twomat{\bI_{\cI}}{\widetilde{\HI}}.
\end{equation}

Let us describe this framework with an example.
\begin{exa}\label{OEx}
Let $m = 3$ and $r = 2$. Then
\begin{align*}
\cC_{\{1,2\}} = \left[\!\!\begin{array}{cc}1&0\\0&1\\u&v\end{array}\!\!\right]\!, \quad & \widetilde{\cC_{\{1,2\}}} = \left[\!\!\begin{array}{c}u\\v\\1\end{array}\!\!\right]\!, \quad \cC_{\widehat{\{3\}}} \cong \cC_{\{1\}}= \left[\!\!\begin{array}{c}1\\v\\u\end{array}\!\!\right]\!, \\ 
\cC_{\{1,3\}} = \left[\!\!\begin{array}{cc}1&0\\u&0\\0&1\end{array}\!\!\right]\!, \quad & \widetilde{\cC_{\{1,3\}}} = \left[\!\!\begin{array}{c}u\\1\\0\end{array}\!\!\right]\!, \quad \cC_{\widehat{\{2\}}} \cong \cC_{\{2\}}= \left[\!\!\begin{array}{c}0\\1\\u\end{array}\!\!\right]\!, \\ 
\cC_{\{2,3\}} = \!\left[\!\!\begin{array}{cc}0&0\\1&0\\0&1\end{array}\!\!\right]\!, \quad & \widetilde{\cC_{\{2,3\}}} = \left[\!\!\begin{array}{c}1\\0\\0\end{array}\!\!\right]\!, \quad \cC_{\widehat{\{1\}}} \cong \cC_{\{3\}}= \left[\!\!\begin{array}{c}0\\0\\1\end{array}\!\!\right] \!.
\end{align*}
Let us focus on $\cI = \{1,3\}$. Then $\cC_\cI$ is constructed directly by definition, that is, in column reduced echelon form with leading positions $1$ and $3$.
Whereas, $\widetilde{\cC_\cI}$ is constructed so that $(\HI)\T\wHI = 0$. Then we revert the rows and columns (only rows in this case) to obtain the last object where we identify\footnote{In this specific case there is no need for identification, but this is only a coincidence. For different choices of $\cI$ one needs a true identification.}  $\{2\} = \widetilde{\{1,3\}}$ with $\widehat{\{2\}} \equiv \{2\}$.

In this case, as we see from above, there is only one free bit. This yields two subspaces/matrices $\bH_\cI$, which when completed to an invertible matrix as in~\eqref{e-PI} yield
\begin{align}\label{e-ex}
    \bP_{u=0} = \left[\!\!\begin{array}{ccc}1&0&0\\0&0&1\\0&1&0\end{array}\!\!\right],\quad \bP_{u=1} = \left[\!\!\begin{array}{ccc}1&0&0\\1&0&1\\0&1&0\end{array}\!\!\right].
\end{align}
Then one directly computes
\begin{align}\label{e-ex1}
    \bP^{-\sf T}_{u=0} = \left[\!\!\begin{array}{ccc}1&0&0\\0&0&1\\0&1&0\end{array}\!\!\right],\quad \bP^{-\sf T}_{u=1} = \left[\!\!\begin{array}{ccc}1&0&1\\0&0&1\\0&1&0\end{array}\!\!\right].
\end{align}
Compare~\eqref{e-ex1} with~\eqref{e-PIT}; the first two columns are obviously $\bI_\cI$, whereas the last column is precisely $\cC_{\widehat{\{2\}}} \equiv \cC_{\{2\}}$ with rows reverted.
Note here that when \emph{all} the free bits are zero then the resulting $\bP$ is simply a permutation matrix, and in this case $\bP^{-\sf T} = \bP$.

\end{exa}

\subsection{Bruhat Decomposition of the Symplectic Group}
We first briefly describe the symplectic structure of $\Fm$ via the symplectic bilinear form
\begin{equation}\label{e-sinner}
    \inners{\ba,{\bf b}}{\bc,\bd}:= {\bf b}\T\bc+\ba\T\bd.
\end{equation}
One is naturally interested in automorphisms that preserve such symplectic structure. It follows directly by the definition that a $2m\times 2m$ matrix $\bF$ preserves $\inners{\sbt}{\sbt}$ iff $\bF\bb{$\mathbf{\Omega}$}\bF\T = \bb{$\mathbf{\Omega}$}$ where 
\begin{equation}
    \bb{$\mathbf{\Omega}$} = \fourmat{{\bf 0}_m}{\bI_m}{\bI_m}{{\bf 0}_m}.
\end{equation}
We will denote the group of all such \emph{symplectic matrices} $\bF$ with $\Sp(2m;2)$. Equivalently, 
\begin{equation}\label{e-bf}
    \bF = \fourmat{\bA}{\bB}{\bC}{\bD} \in \Sp(2m;2)
\end{equation}
iff $\bA\bB\T, \bC\bD\T\in \Sym(m;2)$ and $\bA\bD\T + \bB\bC\T = \bI_m$. It is well-known that
\begin{equation}\label{e-sp}
    |\Sp(2m;2)| = 2^{m^2}\prod_{i=1}^m(4^i - 1).
\end{equation}


Consider the row space $H:=\rs [\bA\,\,\,\,\bB]$ of the $m\times 2m$ upper half of a symplectic matrix $\bF\in \Sp(2m;2)$. 
Because $\bA\bB\T$ is symmetric one has $[\bA\,\,\,\,\bB]\mathbf{\Omega} [\bA\,\,{\bf B}]\T = {\bf 0}$ and thus $\inners{\bx}{\by} = 0$ for all $\bx,\by\in H$. 
We will denote $(\sbt)^\sperp$ the dual with respect to the symplectic inner product~\eqref{e-sinner}.
It follows that $H\subseteq H^\perp$, that is, $H$ is self-orthogonal or \emph{totally isotropic}. 
Moreover, $ H$ is \emph{maximal} totally isotropic because $\dim H = m$ and thus $H = H^\perp$. 
The set of all self-dual/maximal totally isotropic subspaces is commonly referred as the \emph{Lagrangian Grassmannian} $\cL(2m,m;2) \subset \cG(2m,m;2)$. It is well-known that
\begin{equation}\label{e-Lan}
    |\cL(2m,m)| = \prod_{i=1}^m(2^i + 1).
\end{equation}

For reasons that will become clear latter on we are interested in decomposing symplectic matrices into more elementary symplectic matrices, and we will do this via the \emph{Bruhat decomposition} of $\Sp(2m;2)$. While the decomposition holds in a general group-theoretic setting~\cite{Bou68}, here we give a rather elementary approach; see also~\cite{Rao93}. We start the decomposition by writing 
\begin{equation}
    \Sp(2m;2) = \bigcup_{r=0}^m\cC_r,
\end{equation}
where 
\begin{equation}
    \cC_r = \left\{\bF = \fourmat{\bA}{\bB}{\bC}{\bD} \in \Sp(2m;2) \,\, \middle|\,\, \rk \bC = r\right\}.
\end{equation}
In $\Sp(2m;2)$ there are two distinguished subgroups:
\begin{align}
    S_D & := \left\{\FDP = \fourmat{\bP}{{\bf 0}}{{\bf 0}}{\bP^{-\sf T}}\,\,\middle| \,\,\bP\in \GL(m;2)\right\}, \\ 
    S_U & :=\left\{\FUS = \fourmat{\bI}{\bS}{{\bf 0}}{\bI}\,\,\middle| \,\,\bS\in \Sym(m;2)\right\}.
\end{align}
Let $\cP$ be the semidirect product of $S_D$ and $S_U$, that is,
\begin{equation}\label{e-stabgroup}
    \cP = \{\FDP\FUS\mid \bP\in \GL(m;2),\bS\in \Sym(m;2)\}.
\end{equation}
Note that the order of the multiplication doesn't matter since
\begin{equation}\label{e-DU}
    \FDP\FUS = \bF_U(\bP\bS\bP\T)\FDP,
\end{equation}
and $\bP\bS\bP\T$ is again symmetric.
It is straightforward to verify that $\cP = \cC_0$, and that in general
\begin{equation}\label{e-Bruhat}
    \cC_r = \{\bF_1\bF_{\mathbf{\Omega}}(r)\bF_2\mid \bF_1,\bF_2\in \cP\},
\end{equation}
where 
\begin{equation}
    \bF_{\mathbf{\Omega}}(r) = \fourmat{\Imrr}{\Imr}{\Imr}{\Imrr},
\end{equation}
with $\Imr$ being the block matrix with $\bI_r$ in upper left corner and 0 else and $\Imrr = \bI_m - \Imr$. Note here that $\mathbf{\Omega} = \bF_\mathbf{\Omega}(m)$ and $\mathbf{\Omega}\bF_\mathbf{\Omega}(r)\mathbf{\Omega} = \bF_\mathbf{\Omega}(m-r)$.
Then it follows by~\eqref{e-Bruhat} (and by~\eqref{e-DU}) that every $\bF\in \Sp(2m;2)$ can be written as
\begin{equation}\label{e-Bruhat1}
    \bF = \bF_D(\bP_1)\bF_U(\bS_1)\bF_{\mathbf{\Omega}}(r)\bF_U(\bS_2)\bF_D(\bP_2).
\end{equation}
The above constitutes the Bruhat decomposition of a symplectic matrix; see also~\cite{MR18,PTC20}.
\begin{rem}
It was shown in~\cite{Can18} that a symplectic matrix $\bF$ can be decomposed as
\begin{equation}\label{e-trung}
\bF = \bF_D(\bP_1)\bF_U^{\sf T}(\bS_1)\mathbf{\Omega}\bF_\mathbf{\Omega}(r)\bF_U(\bS_2)\bF_D(\bP_2).
\end{equation}
If we, instead, decompose $\mathbf{\Omega} \bF$ as in~\eqref{e-trung} and insert $\mathbf{\Omega}^2 = \bI_{2m}$ between $\bF_D(\bP_1)$ and $\bF_U^{\sf T}(\bS_1)$, we see that~\eqref{e-trung} is reduced to~\eqref{e-Bruhat1}.
This reduction from a seven-component decomposition to a five-component decomposition is beneficial in quantum circuits design~\cite{MR18,RCKP18}. 
\end{rem}
In what follows we will focus on the right action of $\cP$ on $\Sp(2m;2)$, that is, the right cosets in the quotient group $\Sp(2m;2)/\cP$. It is an immediate consequence of~\eqref{e-Bruhat1} and~\eqref{e-DU} that a coset representative will look like
\begin{equation}\label{e-generic}
    \bF_D(\bP)\bF_U(\bS)\bF_{\mathbf{\Omega}}(r),
\end{equation}
for some rank $r$, invertible $\bP$, and symmetric $\bS$. However, two different invertibles $\bP$ may yield representatives of the same coset. We make this precise below.

\begin{lem}\label{L-RightCoset}
A right coset in $\Sp(2m;2)/\cP$ is uniquely characterized by a rank $r$, an $r\times r$ symmetric matrix $\bS_r\in \Sym(r)$, and a $r$-dimensional subspace $H$ in $\F_2^m$.
\end{lem}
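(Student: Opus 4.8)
The plan is to start from the generic coset representative in~\eqref{e-generic}, namely $\bF = \bF_D(\bP)\bF_U(\bS)\bF_{\mathbf{\Omega}}(r)$, and determine exactly when two such representatives lie in the same right coset of $\cP$. Since the rank $r = \rk\bC$ is a class function on $\Sp(2m;2)$ (it is unchanged by the block structure of elements of $\cP$ acting on the left — one checks that $\bF_D(\bP_1)\bF_U(\bS_1)$ does not alter the rank of the lower-left block of $\bF$), the rank is a well-defined invariant of the coset, so I may fix $r$ throughout and compare two representatives $\bF_D(\bP)\bF_U(\bS)\bF_{\mathbf{\Omega}}(r)$ and $\bF_D(\bP')\bF_U(\bS')\bF_{\mathbf{\Omega}}(r)$. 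These represent the same coset iff $\bF_{\mathbf{\Omega}}(r)^{-1}\bF_U(\bS)^{-1}\bF_D(\bP)^{-1}\bF_D(\bP')\bF_U(\bS')\bF_{\mathbf{\Omega}}(r) \in \cP$, i.e. iff $\bF_{\mathbf{\Omega}}(r)\,\bF_U(\bS'-\bS)\,\bF_D(\bP^{-1}\bP')\,\bF_{\mathbf{\Omega}}(r) \in \cP$ (using $\bF_{\mathbf{\Omega}}(r)^{-1} = \bF_{\mathbf{\Omega}}(r)$, $\bF_U(\bS)^{-1} = \bF_U(\bS)$, and pushing the $\bF_U$ past $\bF_D$ via~\eqref{e-DU} at the cost of renaming the symmetric matrix). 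So the whole problem reduces to describing the subgroup of pairs $(\bP,\bS)$ for which conjugating $\bF_D(\bP)\bF_U(\bS)$ by $\bF_{\mathbf{\Omega}}(r)$ lands back in $\cP$.

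Next I would carry out that conjugation explicitly in block form. Write $\bP$ in block form adapted to the split $m = r + (m-r)$, and likewise $\bS$; then multiply out $\bF_{\mathbf{\Omega}}(r)\,\bF_D(\bP)\bF_U(\bS)\,\bF_{\mathbf{\Omega}}(r)$ as a product of four $2m\times 2m$ matrices in $m$-block form, and impose that the resulting lower-left $m\times m$ block vanishes (the condition to lie in $\cC_0 = \cP$). This is a routine but bookkeeping-heavy computation; the upshot I expect is that the "free" data surviving in the coset is: (i) the rank $r$; (ii) the $r$-dimensional subspace $H \subseteq \F_2^m$ obtained as the column/row space associated to the relevant block of $\bP$ — concretely, $H$ is what is left of $\bP$ after one quotients on the right by $\GL(r;2)\times$(the parabolic that stabilizes $V_r$), which is exactly $\cG(m,r;2) \cong \GL(m;2)/\GL(r;2)$ as recorded in the Conventions; and (iii) the top-left $r\times r$ symmetric block $\bS_r$ of $\bS$, the remaining entries of $\bS$ being absorbable. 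To make (ii) fully canonical I would invoke the Schubert-cell machinery of Subsection~\ref{S-SC}: every $r$-dimensional $H$ has a unique representative $\HI \in \cC_{\cI}$ in column reduced echelon form, completed to $\PI \in \GL(m;2)$ via~\eqref{e-PI}, so one can always normalize $\bP$ to have its "$H$-part" equal to $\PI$ and read off that $H$, $\bS_r$, and $r$ together determine the coset.

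Then I would check the converse direction — that distinct triples $(r, H, \bS_r)$ really do give distinct cosets — which amounts to running the block computation backwards: if the conjugate lies in $\cP$, then the lower-left block relation forces the off-$H$ part of $\bP$ and the off-$(r,r)$ part of $\bS$ to be exactly the redundant degrees of freedom, so nothing is lost. As a sanity check I would verify the cardinality: the number of triples is $\sum_{r=0}^m |\cG(m,r;2)|\cdot|\Sym(r;2)| = \sum_{r=0}^m \binom{m}{r}_2 2^{\binom{r+1}{2}}$, and this should match $|\Sp(2m;2)|/|\cP| = 2^{m^2}\prod_{i=1}^m(4^i-1) \,/\, \big(2^{\binom{m+1}{2}}|\GL(m;2)|\big)$ using~\eqref{e-sp}; confirming this identity is a clean independent verification that the parametrization is neither over- nor under-counting.

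The main obstacle is the block-matrix bookkeeping in the conjugation step: one has to track how the parabolic subgroup $\bF_{\mathbf{\Omega}}(r)\cP\bF_{\mathbf{\Omega}}(r) \cap \cP$ sits inside $\cP$, and in particular show that the only constraint is on the $V_r$-block of $\bP$ (yielding the Grassmannian $\cG(m,r;2)$) and the complementary $(m-r)\times(m-r)$ and rectangular blocks of $\bS$ (which get absorbed), while the $r\times r$ symmetric corner $\bS_r$ genuinely survives. Once that intersection subgroup is pinned down, the statement of Lemma~\ref{L-RightCoset} follows immediately, and the Schubert-cell normalization from Subsection~\ref{S-SC} makes the choice of $H$ and hence of the whole representative~\eqref{e-generic} canonical.
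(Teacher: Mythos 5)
Your plan is sound and arrives at the right invariants, but it takes a genuinely different route from the paper. You reduce the problem to computing the subgroup $\cP\cap\bF_{\mathbf{\Omega}}(r)\,\cP\,\bF_{\mathbf{\Omega}}(r)$ (i.e.\ deciding when $\bF_{\mathbf{\Omega}}(r)\bF_U(\bT)\bF_D(\bQ)\bF_{\mathbf{\Omega}}(r)$ has vanishing lower-left block), and you back the parametrization up with the counting identity $\sum_{r}{m\choose r}_2\,2^{r(r+1)/2}=\prod_{i=1}^m(2^i+1)=|\Sp(2m;2)|/|\cP|$. The paper instead works constructively in the other direction: it exhibits two explicit factorization identities, $\bF_U(\bS)\bF_{\mathbf{\Omega}}(r)=\bF_U(\widetilde{\bS_r})\bF_{\mathbf{\Omega}}(r)\bF_U(\widehat{\bS}_{m-r})\bF_D(\widetilde{\bX})$ and $\bF_U(\widetilde{\bS_r})\bF_{\mathbf{\Omega}}(r)\bF_D(\widetilde{\bP})=\bF_D(\widehat{\bP})\bF_U(\widehat{\bP}^{-1}\widetilde{\bS_r}\widehat{\bP}^{-\sf T})\bF_{\mathbf{\Omega}}(r)$, which show directly that the off-corner blocks of $\bS$ and the $\GL(r;2)\times\GL(m-r;2)$ freedom in $\bP$ are absorbed into $\cP$, while the column space of the lower-left block $\bC$ (hence $H$) is untouched. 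The trade-off: the paper's identities are exactly the ``bookkeeping-heavy'' computation you defer, handed over in closed form, so its argument is shorter once you accept those two displays; your version is more systematic (the parabolic-intersection viewpoint generalizes to any Bruhat cell) and, importantly, your cardinality check via the $2$-binomial theorem is the cleanest way to settle injectivity --- that distinct triples $(r,H,\bS_r)$ give distinct cosets --- a point the paper's proof leaves implicit (it is recoverable from Remark~\ref{R-states}). Two small cautions: (i) for the cosets $\bF\cP$ considered here it is the \emph{right} action of $\cP$ that must preserve the data, and indeed right multiplication sends $\bC\mapsto\bC\bP$, preserving both $\rk\bC$ and $\cs(\bC)$; left multiplication would preserve the rank but not the column space, so your parenthetical about ``acting on the left'' should be fixed. (ii) The quotient realizing $\cG(m,r;2)$ is by the full parabolic stabilizing an $r$-dimensional subspace, not by $\GL(r;2)$ alone; your phrasing there should be tightened, though this does not affect the argument.
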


\begin{proof}
Write a coset representative $\bF$ as in~\eqref{e-generic}. This immediately determines $r$. Next, write $\bS$ in a block form
\begin{equation}
    \bS = \fourmat{\bS_r}{\bX}{\bX^{\sf T}}{\bS_{m-r}},
\end{equation}
where $\bS_r,\bS_{m-r}$ are symmetric. Denote $\widetilde{\bS_r},\widehat{\bS}_{m-r}\in \Sym(m;2)$ the matrices that have $\bS_r$ and $\bS_{m-r}$ in upper left and lower right corner respectively and 0 otherwise. Put also
\begin{equation}
    \widetilde{\bX} = \fourmat{\bI_r}{{\bf 0}}{\bX\T}{\bI_{m-r}}.
\end{equation}
With this notation we have
\begin{equation}
    \bF_U(\bS)\bF_{\mathbf{\Omega}}(r) = \bF_U(\widetilde{\bS_r})\bF_{\mathbf{\Omega}}(r)\bF_U(\widehat{\bS}_{m-r})\bF_D(\widetilde{\bX}).
\end{equation}
In other words $\bF_U(\bS)\bF_{\mathbf{\Omega}}(r)$ and $\bF_U(\widetilde{\bS_r})\bF_{\mathbf{\Omega}}(r)$ belong to the same coset. Now consider an invertible
\begin{equation}
    \widetilde{\bP} = \fourmat{\bP_r}{{\bf 0}}{{\bf 0}}{\bP_{m-r}}.
\end{equation}
It is also straightforward to verify that
\begin{align*}
    \bF_U(\widetilde{\bS_r})\bF_{\mathbf{\Omega}}(r)\bF_D(\widetilde{\bP}) & =  \bF_U(\widetilde{\bS_r})\bF_D(\widehat{\bP})\bF_{\mathbf{\Omega}}(r) \\
    & = \bF_D(\widehat{\bP})\bF_U(\widehat{\bP}^{-1}\widetilde{\bS_r}\widehat{\bP}^{- \sf T})\bF_{\mathbf{\Omega}}(r),
\end{align*}
where
\begin{equation}
    \widehat{\bP} = \fourmat{\bP_r^{-\sf T}}{{\bf 0}}{{\bf 0}}{\bP_{m-r}},
\end{equation}
and the second equality follows by~\eqref{e-DU}. Thus $\bF_D(\bP_1)\bF_U(\bS_1)\bF_{\mathbf{\Omega}}(r)$, where 
\begin{equation}\label{e-tran}
    \bP_1: =  \bP\widehat{\bP}, \,\,\bS_1 :=\widehat{\bP}^{-1}\widetilde{\bS_r}\widehat{\bP}^{-\sf T}
\end{equation}
represents the same coset. 
Note that the transformation~\eqref{e-tran} doesn't change the column space $\bC$ (that is, the lower left corner of $\bF$), which is an $r$-dimensional subspace in $\F_2^m$.
\end{proof}

Next, using Schubert cells will choose a canonical coset representative. We will use the same notation as in the above lemma. 
Let $r$ and $\widetilde{\bS_r}$ be as above. To choose $\bP$, think of the $r$-dimensional subspace $H$ from the above lemma as the column space of a matrix $\bH$, which belongs to some Schubert cell $\cC_{\cI}$. We will use the coset representative
\begin{equation}\label{e-can}
    \bF_O(\bP_{\cI}, \bS_r):= \bF_D(\bP_{\cI})\bF_U(\widetilde{\bS_r})\bF_{\mathbf{\Omega}}(r),
\end{equation}
where $\bP_{\cI}$ is as in~\eqref{e-PI}.


Let $\bF\in \Sp(2m;2)$ be in block from as in~\eqref{e-bf}, and assume it is written as 
\begin{equation}\label{e-decomp}
    \bF = \bF_D(\bP^{-\sf T})\bF_U(\widetilde{\bS_r})\bF_\mathbf{\Omega}(r)\bF_D(\bM)\bF_U(\bS).
\end{equation}
Multiplying both sides of~\eqref{e-decomp} on the left with $\bF_D(\bP\T)$ and on the right with $\FUS$, and then comparing respective blocks we obtain
\begin{align}
    \bP\T\bA & = (\widetilde{\bS_r} + \Imrr)\bM, \label{e-b1}\\ 
    \bP\T\bA\bS & = \bP\T\bB + \Imr\bM^{-\sf T},\label{e-b2}\\ 
    \bP^{-1}\bC & = \Imr\bM,\label{e-b3}\\ 
    \bP^{-1}\bC\bS & = \bP^{-1}\bD + \Imrr\bM^{-\sf T}, \label{e-b4}
\end{align}
which we can solve for $\bM, \widetilde{\bS_r}$ and $\bS$, while assuming that we know $\bF$ (and implicitly $\bP$ which can be determined by the column space of the lower-left block of $\bF$). First we find $\bM$. For this, recall that $\widetilde{\bS_r}$ has nonzero entries only on the upper left $r\times r$ block. Thus, it follows by~\eqref{e-b1} that the last $m-r$ rows of $\bM$ coincide with the last $m-r$ rows of $\bP\T\bA$. Similarly, it follows from~\eqref{e-b3} that the first $r$ rows of $\bM$ coincide with the first $r$ rows of $\bP^{-1}\bC$. With $\bM$ in hand we have
\begin{equation}\label{e-b5}
    \widetilde{\bS_r} = \bP\T\bA\bM^{-1} + \Imrr.
\end{equation}
By using~\eqref{e-b3} in~\eqref{e-b4} we see that the first $r$ rows of $\bM\bS$ coincide with first $r$ rows of $\bP^{-1}\bC\bS$. Similarly, by using~\eqref{e-b1} in~\eqref{e-b2}, we see that the last $m-r$ rows of $\bM\bS$ coincide with the last $m-r$ rows of $\bP\T\bA\bS$. Multiplication with $\bM^{-1}$ yields $\bS$.
We collect everything in Algorithm~\ref{algo:alg1}, which gives not only the Bruhat decomposition but also a canonical coset representative.

\begin{algorithm} \caption{Bruhat Decomposition of Symplectic Matrix} \label{algo:alg1}
{\bf Input:} A symplectic matrix $\bF$. 
\begin{algorithmic}
\STATE~1.  \hspace{.02 in} Block decompose $\bF$ to $\bA,\bB,\bC,\bD$ as in~\eqref{e-bf}.
\STATE~2. \hspace{.02 in} $r = \rk(\bC)$.
\STATE~3. \hspace{.02 in} Find $\bP$ as in~\eqref{e-PI} from $\cs(\bC)$. 
\STATE~4. \hspace{.02 in} $\bM_{\rm up}$ is the first $r$ rows of $\bP^{-1}\bC$.
\STATE~5. \hspace{.02 in} $\bM_{\rm lo}$ is the last $m-r$ rows of $\bP\T\bA$.
\STATE~6. \hspace{.02 in} $\bM = \twomatv{\bM_{\rm up}}{\bM_{\rm lo}}$.
\STATE~7. \hspace{.02 in} $\widetilde{\bS_r} = \bP\T\bA\bM^{-1} + \Imrr$.
\STATE~8. \hspace{.02 in} $\bS_r$ is the upper left $r\times r$ block of $\widetilde{\bS_r}$.
\STATE~9. \hspace{.02 in} $\bN_{\rm up}$ is the first $r$ rows of $\bP^{-1}\bD + \Imrr\bM^{-\sf T}$.
\STATE~10. $\bN_{\rm lo}$ is the last $m-r$ rows of $\bP\T\bB - \Imr\bM^{-\sf T}$.
\STATE~11. $\bS = \bM^{-1}\twomatv{\bN_{\rm up}}{\bN_{\rm lo}}$.
\end{algorithmic}
{\bf Output:} $r,\bP,\bS_r,\bM,\bS$
\end{algorithm}

We end this section with a few remarks.

\begin{rem}
One can follow an analogous path by considering left action of $\cP$ on $\Sp(2m;2)$. This follows most directly by the observation that if $\bF = \bF_D(\bP)\bF_U(\bS)\bF_{\mathbf{\Omega}}(r)$ is a right coset representative then $\bF^{-1} = \bF_{\mathbf{\Omega}}(r)\bF_U(\bS)\bF_D(\bP^{-1})$ is a left coset representative.
\end{rem}
\begin{rem}
Note that for the extremal case $r = m$, a coset representative as in~\eqref{e-can} is completely determined by a symmetric matrix $\bS\in \Sym(m;2)$, since in this case, as one would recall, $\bP_{\cI} = \bI_{\cI} = \bI_m$.
\end{rem}
\begin{rem}\label{R-states}
Directly from the definition we have 
\begin{equation}
    |\cP| = |\GL(m;2)|\cdot |\Sym(m;2)| = 2^{m^2}\prod_{i=1}^m(2^i -1),
\end{equation}
which combined with~\eqref{e-sp} yields
\begin{equation}
    |\Sp(2m;2)/\cP| = \prod_{i=1}^m(2^i+1) = |\cL(2m,m)|.
\end{equation}
The above is of course not a coincidence. Indeed, $\Sp(2m;2)$ acts transitively from the right on $\cL(2m,m)$. 
Next, consider $\rs\twomat{{\bf 0}_m}{\bI_m}\in \cL(2m,m)$. If a symplectic matrix $\bF$ as in~\eqref{e-bf} fixes this space, then $\bC = {\bf 0}$ and $\bA$ is invertible.
Additionally, because $\bF$ is symplectic to start with, we obtain $\bD = \bA^{\!-\sf T}$ and $\bA\bB\T =: \bS$ is symmetric. 
Thus $\bB = \bS\bA\!\T$, and $\bF \in \cP$. 
That is, $\cP$ is the stabilizer (in a group action terminology) of $\rs\twomat{{\bf 0}_m}{\bI_m}\in \cL(2m,m)$.
The mapping $\Sp(2m;2)/\cP\longrightarrow \cL(2m,m)$, given by 
\begin{equation}\label{e-bij}
    \bF_O(\bP_{\cI},\bS_r)\longmapsto \rs \left[\!\!\begin{array}{cc}\Imr\bP_{\cI}\T & (\Imr\widetilde{\bS_r}+\Imrr)\bP_{\cI}^{-1}\end{array} \!\!\right]
\end{equation}
is well-defined (because, as mentioned, the upper half of a symplectic matrix is maximal isotropic). It is also injective, and thus bijective due to cardinality reasons. Of course one can have many bijections but we choose this one due to Theorem~\ref{T-ES}.
\end{rem}

\subsection{The Heisenberg-Weyl Group}\label{Sec-HW}
Fix $N = 2^m$, and let $\{\be_0,\be_1\}$ be the standard basis of $\C^2$, which is commonly referred as the \emph{computational basis}. 
For $\bv = (v_1,\ldots,v_m) \in \F_2^m$ set $\be_{\bv}:=\be_{v_1}\otimes\cdots\otimes \be_{v_m}$. 
Then $\{\be_{\bv}\mid \bv\in \F_2^m\}$ is the standard basis of $\C^N \cong (\C^2)^{\otimes m}$. The \emph{Pauli matrices} are
\begin{equation}
\begin{array}{cccc}
\bI_2, & \!\!\sigma_x = \fourmat{0}{1}{1}{0}\mbox, & \!\!\sigma_z = \fourmat{1}{0}{0}{-1}\mbox, & \!\!\sigma_y = i\sigma_x\sigma_z.
\end{array}
\end{equation}
For $\ba,{\bf b}\in \F_2^m$ put 
\begin{equation}
    \bD(\ba,{\bf b}):=\sigma_x^{a_1}\sigma_z^{b_1}\otimes\cdots\otimes \sigma_x^{a_m}\sigma_z^{b_m}.
\end{equation}
Directly by definition we have
\begin{equation}\label{e-XZ}
    \bD(\ba,{\bf 0})\be_{\bv} = \be_{\bv+\ba} \text{ and }\bD({\bf 0},{\bf b})\be_{\bv} = (-1)^{\bf b\scriptsize\T\bv}\be_{\bv},
\end{equation}
and thus, the former is a permutation matrix whereas the latter is a diagonal matrix. Then 
\begin{equation}\label{e-mult}
    \bD(\ba,{\bf b})\bD(\bc,\bd) = (-1)^{\scriptsize{\bf b}\T\bc}\bD(\ba + \bc,{\bf b}+\bd).
\end{equation}
Thanks to~\eqref{e-mult} we have 
\begin{equation}
    \bD(\ba,{\bf b})\bD(\bc,\bd) =  (-1)^{\scriptsize{\bf b}\T\bc+\ba\T\bd}\bD(\bc,\bd)\bD(\ba,{\bf b}).
\end{equation}
In turn, $\bD(\ba,{\bf b})$ and $\bD(\bc,\bd)$ commute iff
\begin{equation}\label{e-symp}
    \inners{\ba,{\bf b}}{\bc,\bd}:= {\bf b}\T\bc+\ba\T\bd = 0,
\end{equation}
that is, iff $(\ba,{\bf b})$ and $(\bc,\bd)$ are orthogonal with respect to the symplectic inner product~\eqref{e-sinner}. Also thanks to~\eqref{e-mult}, the set
\begin{equation}
    \cH\cW_N := \{i^k\bD(\ba,{\bf b})\mid \ba,{\bf b}\in \F_2^m, k = 0,1,2,3\}
\end{equation}
is a subgroup of $\U(N)$ and is called the \emph{Heisenberg-Weyl} group. We will also call its elements Pauli matrices as well. 
Directly from the definition, we have a surjective homomorphism of groups 
\begin{equation}\label{e-PsiN}
\Psi_N : \cH\cW_N \longrightarrow \F_2^{2m}, \,\, i^k\bD(\ba,{\bf b})\longmapsto (\ba,{\bf b}).
\end{equation}
Its kernel is $ \ker \Psi_N = \{\pm \bI_N,\pm i\bI_N\} \cong \Z_4$. 
We will denote $\cH\cW_N^* := \cH\cW_N/\ker \Psi_N$ the \emph{projective} Heisenberg-Weyl group, and the induced isomorphism $\Psi_N^*$. 

Note that $\inners{\sbt}{\sbt}$ defines a nondegenerate bilinear form in $\F_2^{2m}$ that translates commutativity in $\cH\cW_N$ to orthogonality in $\F_2^{2m}$.
A commutative subgroup $\cS\subset \HW$ is called a \emph{stabilizer group} if $-\bI_N\notin \cS$. Thus, for a stabilizer $\cS$, thanks to \eqref{e-symp} we have $\Psi_N(\cS) \subseteq \Psi_N(\cS)^{\sperp}$ \cite{CRSS97, CRSS98}. 
In addition, because $\Psi_N$ restricted to a stabilizer is an isomorphism, we have that $|\cS| = 2^r$ iff $\dim \Psi_N(\cS) = r$. 
We will think of $\Psi_N(\cS)$ as the row space of a full rank matrix $[\bA\,\,\,\,\bB]$ where both $\bA$ and $\bB$ are $r\times m$ binary matrices. We will write 
\begin{equation}
  \bE(\bA,\,{\bf B}) : = \{\bE(\bx\T\bA,\,\bx\T{\bf B})\mid \bx\in \F_2^r\},  
\end{equation}
where $\bE(\ba,\,{\bf b}):=i^{\scriptsize \ba\T{\bf b}}\bD(\ba,{\bf b}).$ Combining this with \eqref{e-XZ} and \eqref{e-mult} we obtain
\begin{equation}\label{e-Eab1}
    \bE(\ba,{\bf b}) = i^{\scriptsize \ba\T{\bf b}}\sum_{\scriptsize \bv\in\F_2^m}(-1)^{\scriptsize{\bf b}\T\bv}\be_{\scriptsize\bv+\ba}\be_{\scriptsize\bv}\!\!\!\T.
\end{equation}
Next, if $\rs [\bA\,\,\,\,\bB]$ is self-orthogonal in $\F_2^{2m}$ then $\bE(\bA,\,\bB)$ is a stabilizer.
Moreover, $\Psi_N(\bE(\bA,\,\bB)) = \rs [\bA\,\,\,\,\bB]$, which yields a one-to-one correspondence between stabilizers in $\HW$ and self-orthogonal subspaces in $\F_2^{2m}$.
It also follows that a maximal stabilizer must have $2^m$ elements. Thus there is a one-to-one correspondence between maximal stabilizers and Lagrangian Grassmannians $\cL(2m,m) \subset \cG(2m,m)$.
Of particular interest are maximal stabilizers
\begin{align}
    \cX_N & := \bE(\bI_m, {\bf 0}_m) = \{\bE(\ba,{\bf 0}) \mid \ba \in \F_2^m\}, \\
    \cZ_N & := \bE({\bf 0}_m,\bI_m) = \{\bE({\bf 0},{\bf b}) \mid {\bf b} \in \F_2^m\},
\end{align}
which we naturally identify with $X_N := \Psi_N(\cX_N) = \rs [\bI_m\,\,\,\,{\bf 0}_m]$ and $Z_N := \Psi_N(\cZ_N) = \rs [{\bf 0}_m \,\,\,\,\bI_m]$.

What follows holds in general for any stabilizer, but for our purposes, we need only focus on the maximal ones. 
Let $\cS = \bE(\bA,\,{\bf B}) \subset \HW$ be a maximal stabilizer and let $\{\bE_1,\ldots,\bE_m\}$ be an independent generating set of $\cS$ (that is, $\spann \{\Psi_N(\bE_1), \ldots,\Psi_N(\bE_m)\} = \Psi_N(\cS)$). Consider the complex vector space \cite{Gottesman97} 
\begin{equation}\label{e-state}
    V(\cS) := \{\bv\in \C^N\mid \bE_i\bv=\bv, i=1,\ldots,m\}.
\end{equation}
It is well-known (see, e.g.,~\cite{NC00}) that $\dim V(\cS) = 2^m/|\cS| = 1$. A unit norm vector that generates it is called \emph{stabilizer state}, and with a slight abuse of notation is also denoted by $V(\cS)$. Because we are disregarding scalars, it is beneficial to think of a stabilizer state as \emph{Grassmannian line}, that is, $V(\cS)\in \cG(\C^N,1)$. Next,
\begin{equation} \label{e-proj}
    \bb{$\Pi$}_{\cS} := \prod_{i=1}^m\frac{\bI_N+\bE_i}{2} = \frac{1}{N}\sum_{\bE\in\cS}\bE
\end{equation}
is a projection onto $V(\cS)$.

Given a stabilizer as above, for any $\bd\in \F_2^m$, $\{(-1)^{d_1}\bE_1,\ldots,(-1)^{d_m}\bE_m\}$ also describes a stabilizer $\cS_{\bd}$. Similarly to \eqref{e-proj} put 
\begin{align}\label{e-projD}
\begin{split}
    \bb{$\Pi$}_{\cS_{\bd}} := & \prod_{i=1}^m\frac{\bI_N+(-1)^{d_i}\bE_i}{2} \\  = & \frac{1}{N}\sum_{{\scriptsize\bx}\in\F_2^m}(-1)^{\bd^{\sf T}\bx}\bE(\bx\T\bA,\,\bx\T{\bf B}).
\end{split}
\end{align}
It is readily verified that $\{\bb{$\Pi$}_{\cS_{\bd}}\mid \bd\in \F_2^m\}$ are pair-wise orthogonal, and a stabilizer group determines a resolution of the identity
\begin{equation}\label{e-projDD}
\bI_N = \sum_{\bd\in\F_2^m} \bb{$\Pi$}_{\cS_{\bd}}.
\end{equation}
Thus every such projection determines a one-dimensional subspace which with another abuse of notation (see also Remark \ref{R-char} below) we call a stabilizer state. 
%

\begin{rem}\label{R-char}
A stabilizer state as in \eqref{e-state} is the unit norm vector that is fixed by the stabilizer. 
Now for every $\bE\in \cS = \bE(\bA,\,{\bf B})$ there exists a unique $\bx\in \F_2^m$ such that $\bE = \bE(\bx\T\bA,\,\bx\T{\bf B})$. 
For $\bb{d}\in \F_2^m$, consider the map $\chi_{\bd}:\bE(\bx\T\bA,\,\bx\T{\bf B})\longmapsto (-1)^{\scriptsize\bd\T\bx}$. Then $\bb{$\Pi$}_{\cS_{\bd}}$ projects onto 
\begin{equation}\label{e-VSD}
    V(\cS_{\bd}):=\{\bv\in \C^N\mid \bE\bv = \chi_{\bd}(\bE)\bv \text{ for all } \bE\in \cS_{\bd}\},
\end{equation}
that is, the state that under the action of $\bE$ is scaled by $\chi_{\bd}(\bE)$.
Then of course $V(\cS_{\scriptsize{\bf 0}}) = V(\cS)$ where ${\bf 0}\in \F_2^m$. 
In addition, the map $\chi_{\bd}$ is a linear character of $\cS$, which has led to non-binary considerations \cite{AK01}.
\end{rem}
\begin{rem}
Let $\{\bE_1,\ldots,\bE_m\}$ be an independent generating set of a maximal stabilizer $\cS$ and consider $\cS_{\bd}$. By~\cite[Prop.~10.4]{NC00} it follows that for each $i=1,\ldots, m$, there exists $\bG_i\in \HW$ such that $\bG_i^\dagger\bE_i\bG_i = -\bE_i$ and $\bG_i^\dagger\bE_j\bG_i = \bE_j$ for $i\neq j$. Now put $\bG_{\bd}: = \bG_1^{d_1}\cdots\bG_m^{d_m}$. Then
\begin{equation}
    \bG_{\bd}^\dagger \bb{$\Pi$}_\cS\bG_{\bd} = \bb{$\Pi$}_{\cS_{\bd}}.
\end{equation}
It follows that $\{V(\cS_{\bd})\mid \bd\in \F_2^m\}$ is an orthonormal basis of $\C^N$. In \cite{TBV17} the authors used a similar insight to construct maximal sets of \emph{mutually unbiased bases}.
\end{rem}

\section{Clifford Group}\label{Sec-Cliff}
The Clifford group in $N$ dimensions \cite{Gottesman09} is defined to be the normalizer of $\HW$ in $\U(N)$ modulo $\U(1)$:
\begin{equation}
    \cl_N=\{\bG\in \U(N)\mid \bG\HW\bG^\dagger = \HW\}/\U(1).
\end{equation}
The reason one quotients out $\U(1) \cong \{\alpha\bI_N\mid |\alpha|=1\}$, is to obtain a finite group. In this case $\HW^*$ is a normal subgroup of $\cl_N$.

Let $\{\be_1,\ldots,\be_{2m}\}$ be the standard basis of $\Fm$, and consider $\bG\in \cl_N$. Let $\bb{c}_i\in \Fm$ be such that 
\begin{equation}
    \bG\bE(\be_i)\bG^\dagger = \pm\bE(\bc_i).
\end{equation}
Then the matrix $\bF_{\scriptsize\bG}$ whose $i$th row is $\bc_i$ is a symplectic matrix such that
\begin{equation}\label{e-cliff}
    \bG\bE(\bc)\bG^\dagger = \pm\bE(\bc\T\bF_{\scriptsize\bG})
\end{equation}
for all $\bc\in \Fm$. Based on~\eqref{e-cliff} we obtain a group homomorphism 
\begin{equation}\label{e-Phi}
    \Phi : \cl_N\longrightarrow \Sp(2m;2),\quad \bG\longmapsto \bF_{\bG},
\end{equation}
with kernel $\ker \Phi = \HW^*$~\cite{RCKP18}. This map is also surjective; see Section~\ref{S-DC} where specific preimages are given. From \eqref{e-sp} and \eqref{e-PsiN} ($|\HW^*| = 2^{2m}$) follows that 
\begin{equation}\label{e-cliff1}
    |\cl_N| = 2^{m^2+2m}\prod_{i=1}^m(4^i-1).
\end{equation}
\begin{rem}\label{R-Gd}
Since $\Phi$ is a homomorphism we have that $\Phi(\bG^\dagger) = \bF_{\bG}^{-1}$ and as a consequence $\bG^\dagger\bE(\bc)\bG = \pm\bE(\bc\T\bF_{\bG}^{-1})$. We will make use of this simple observation later on to determine when a column of $\bG$ is an eigenvector of $\bE(\bc)$. This interplay with symplectic geometry provides an exponential complexity reduction in various applications. Here, we will focus on efficiently computing common eigenspaces of maximal stabilizers.
\end{rem}
The \emph{phase} and \emph{Hadamard} matrices
\begin{equation}\label{e-ph}
    \bG_P=\fourmat{1}{0}{0}{i} \text{ and } \bH_2=\frac{1}{\sqrt{2}}\fourmat{1}{1}{1}{-1}
\end{equation}
are easily seen to be in the Clifford group $\cl_2$.
Some authors also include $(\bG_P\bH_2)^3 = \exp(\pi i/4)\bI_2$~\cite{CRSS98}, which in our case would disappear as a scalar quotient. Thus~\eqref{e-cliff1} differs by a factor of 1/8 of what is commonly considered as the cardinality of the Clifford group; see \href{https://oeis.org/A003956}{A003956} at oeis.org. For our purposes the phases are irrelevant. 

\subsection{Decomposition of the Clifford Group}\label{S-DC}
In this section we will make use of the Bruhat decomposition of $\Sp(2m;2)$ to obtain a decomposition of $\cl_N$. To do so we will use the surjectivity of $\Phi$ from~\eqref{e-Phi} and determine preimages of coset representatives from~\eqref{e-can}. The preimages of symplectic matrices $\FDP,\FUS$, and $\bF_{\mathbf{\Omega}}(r)$ under $\Phi$ are the unitary permutation matrix
\begin{align}\label{e-GDP}
    \bG_D(\bP) & := \be_{\scriptsize\bv}\longmapsto \be_{\scriptsize{\bP\T\bv}},\\\label{e-GDP1}
    \bG_U(\bS) & := \diag\left(i^{\scriptsize{\bv\T\bS\bv} \mod 4}\right)_{\scriptsize\bv\in\F_2^m}, \\
    \bG_{\mathbf{\Omega}}(r) & :=(\bH_2)^{\otimes r}\otimes \bI_{2^{m-r}},\label{e-Gom}
\end{align}
respectively. We refer the reader to~\cite[Appendix I]{RCKP18} for details.
\begin{rem}
Note that directly by the definition of the Hadamard matrix we have
\begin{equation}\label{e-Hadamard}
    \bH_N:=\bG_{\mathbf{\Omega}}(m) = \frac{1}{\sqrt{2^m}}[(-1)^{\scriptsize{\bv\T\bw}}]_{\bv,\bw\in \F_2^m}.
\end{equation}
Whereas, for any $r=1,\ldots,m$, one straightforwardly computes
\begin{equation}\label{e-Gomr}
    \bG_{\mathbf{\Omega}}(r)\!\cdot\!\bZ(m,r) = [(-1)^{\scriptsize{\bv\T\bw}}\cdot f(\bv,\bw,r)]_{\bv,\bw\in \F_2^m},
\end{equation}
where $\bZ(m,r) := \bI_{2^r}\otimes \sigma_z^{\otimes m-r}$ is the diagonal Pauli that acts as $\sigma_z$ on the last $m-r$ qubits, and 
\begin{equation}\label{e-f}
    f(\bv,\bw,r) = \prod_{i=r+1}^m(1+v_i + w_i).
\end{equation}
Note that the value of $f$ will be 1 precisely when $\bv$ and $\bw$ coincide in their last $m-r$ coordinates and 0 otherwise. 
It follows that $f$ is identically 1 when $r = m$ and $f$ is the Kronecker function $\delta_{\bv,\bw}$ when $r=0$. We will use $f$ to determine the \emph{sparsity} or \emph{rank} of a Clifford matrix/stabilizer state. Of course $r = m$ corresponds to \emph{fully occupied} objects with only nonzero entries; see also Remarks~\ref{R-rank0} and~\ref{R-rank1} for the extreme cases of $r = 0, 1$.
\end{rem}

\begin{exa}[\mbox{Example~\ref{OEx} continued}]\label{OEx1} Let us reconsider the invertible matrices from~\eqref{e-ex}.
Recall that there we had $m=3,r=2$. 
Here we will construct the Cliffords corresponding to the canonical coset representative~\eqref{e-bij}, with $\bS_r = \mathbf{0}_{2\times 2}$.
For the case $u=0$ one computes\footnote{See Section~\ref{SS-AS} for why we consider the transpose instead of $\bP$ itself.} $\bG_D(\bP^{\sf T}_{u = 0})$ as in~\eqref{e-GDP1}, and multiplies it (from the right) by $\bG_\mathbf{\Omega}(2)$ as in~\eqref{e-Gom} (we will omit $1/\sqrt{2^2}$) and and then by $\bZ(3,2)$ to obtain
\begin{equation}\label{e-eex2}
  \bG_{u=0} = \left[\!\!\begin{array}{cccccccc}+&0&+&0&+&0&+&0\\+&0&-&0&+&0&-&0\\     0 &   -  &   0   & -   &  0  &  - &    0  &  -\\     0 &   -  &   0  &   +   &  0  &  -   &  0  &   +\\          +   &  0  &   +   &  0  &  -  &   0  &  -   &  0\\     +   &  0  &  -   &  0  &  - &    0  &   +   &  0\\     0   & -  &   0  &  -&     0   &  + &    0   &  + \\      0   & -  &   0   &  +   &  0 &    +  &   0  &  -
\end{array}\!\!\right].
\end{equation}
As mentioned,~\eqref{e-GDP} by definition yields a permutation matrix. Thus $\bG_{u=0}$ is nothing else but $\bG_\mathbf{\Omega}(2) = \bH_2\otimes\bH_2\otimes \bI_2$, with its rows permuted accordingly, and a possible sign introduced to its columns by the diagonal matrix $\bZ(3,2) = \bI_4\otimes \sigma_z$. Similarly, for the case $u = 1$, one obtains
\begin{equation}
  \bG_{u=1} = \left[\!\!\begin{array}{cccccccc}     +  &   0  &   +  &   0 &    + &    0  &   +   &  0\\     +  &   0  &  -&     0  &   +  &   0   & -   &  0 \\      0   & -  &   0   & -   &  0   & -  &   0  &  - \\      0  &  -   &  0   &  +  &   0 &   -  &   0&     + \\
     0 &   -  &   0  &  -   &  0  &   + &    0 &    +\\      0  &  -  &   0  &   +   &  0  &   +  &   0   & -\\
     +  &   0    & +  &   0  &  -   &  0  &  - &    0\\
     +  &   0  &  -  &   0  &  - &    0  &   + &    0
\end{array}\!\!\right].
\end{equation}
We will discuss how the $\{+,-,0\}$ patterns are correlated later on.
\end{exa}
Let us now return to the Clifford group. The Bruhat decomposition~\eqref{e-Bruhat1} of $\Sp(2m;2)$ already gives a decomposition of $\cl_N$. However, in order to have a concise approach one has to be a bit careful. 
In this section we will  write $\bG = \Phi^{-1}(\bF)$, where the equality is taken modulo the center $\HW^*\ltimes \Z_8$. In other words, we will disregard the central part of $\Phi^{-1}(\bF)$ and consider only the Clifford part. The cyclic group $\Z_8$ of order 8 comes into play to accommodate 8th roots of unity coming out of products $\bG_U(\bS)\bG_{\mathbf{\Omega}}(r)$. This setup, yet again, confirms the importance of
\begin{equation}
    \cl^*_N: = \{\exp(i\pi k/4)\bG \mid k\in \Z_8,\bG\in \cl_N\}.
\end{equation}
Let $\cG = \{\bG_D(\bP) \bG_U(\bS)\mid \bP\in \GL(m;2),\bS\in \Sym(m;2)\}$ be the preimage of $\cP$ from~\eqref{e-stabgroup}. For obvious reasons, it is referred as the \emph{Hadamard-free} group; see also~\cite{bravyi2020hadamardfree}. As for the case of the symplectic group, this group acts from the right on matrices of the form
\begin{equation}
     \bG_D(\bP_1) \bG_U(\bS_1)\bG_{\mathbf{\Omega}}(r) \bG_U(\bS_2) \bG_D(\bP_2)
\end{equation}
and thus, a coset representative would look like
\begin{equation}
    \bG_{\scriptsize\bF}:=\bG_D(\bP_1) \bG_U(\bS_1)\bG_{\mathbf{\Omega}}(r),
\end{equation}
One is interested on coset representatives. To understand this, it is enough to understand the preimage of generators of $\GL(m;2)$ and $\Sym(m;2)$. Let us start with the former, which can be generated by two elements \cite{st68}. 
Namely, it can be generated  $\bP:=\bI_m + \bE_{12}$ where $\bE_{12}$ is the elementary (binary) matrix with 1 in position $(1,2)$ and 0 elsewhere and the cyclic permutation matrix $\bb{$\Pi$}_{\text{cycl}}$ acting as the permutation $(12\cdots m)$. It is of interest to consider a larger set of generators. Let $\bb{$\Pi$}_{i,j}$ be a transposition matrix. Then of course $\bP$ along with all the $\bb{$\Pi$}_{i,j}$ generate $\GL(m;2)$. While $\bb{$\Pi$}_{i,j}$ swaps dimensions $i$ and $j$ in $\F_2^m$, it is easily seen that $\Phi^{-1}(\bF_D(\bb{$\Pi$}_{i,j}))$ swaps the tensor dimensions $i$ and $j$ in $(\C^2)^{\otimes m}$. Moreover
\begin{equation}
    \Phi^{-1}(\bF_D(\bP)) = \left[\!\!\begin{array}{cccc}1&0&0&0\\0&1&0&0\\0&0&0&1\\0&0&1&0\end{array}\!\!\right]\otimes\bI_{N-4}.
\end{equation}
The above $4\times 4$ matrix is known in quantum computation as the controlled-NOT (CNOT) quantum gate. In itself, the CNOT gate is of form $\bG_D(\bP)$ where
\begin{equation}
    \bP = \bP^{-1} = \fourmat{1}{1}{0}{1}.
\end{equation}

For $\Sym(m;2)$ we consider matrices $\bS_{\scriptsize\bv}:=\bv\T\bv$ where $\bv\in \F_2^m$ is a vector with at most two non-zero entries. Then
\begin{equation}\label{e-vvt}
    \Phi^{-1}(\bF_U(\bS_\bv)) = \frac{1}{\sqrt{2}}(\bI_N+i\bE({\bf 0},\bv)).
\end{equation}
Note that when $\bv$ has exactly one non-zero entry in position $j$, the $j$th tensor dimension will contain the phase matrix $\bG_P = \exp(-i\pi/4)(\bI_2 + i\sigma_z)$ form~\eqref{e-ph}.
On the other hand, when $\bv$ has exactly two non-zero entries~\eqref{e-vvt} gives rise to $\bG_{CZ}(\bG_P\otimes \bG_P)$ in tensor dimensions $i$ and $j$, where $\bG_{CZ} = \diag(1,1,1,-1)$. 
The latter is known in quantum computation as the controlled-Z (CZ) quantum gate, and it is of form $\bG_U(\bI_2)$. 

In conclusion, the Bruhat decomposition of $\Sp(2m;2)$ directly yields some fundamental quantum gates as described above. Similar ideas were used in \cite{MR18} where the \emph{depth} of \emph{stabilizer circuits} was considered. Classically, there exist several decompositions of the symplectic group, which in principle would yield a decomposition of the Clifford group. Particularly important in quantum computation is the decomposition into \emph{symplectic transvections}~\cite{OMeara}; see~\cite{PRTC20,pllaha2021decomposition} for a detailed description and~\cite{KS14} for using these ideas to sample the Clifford group.

\section{Binary Subspace Chirps}\label{Sec-BSSC}
Binary subspace chirps (BSSCs) were introduced in~\cite{TC19} as a generalization of binary chirps (BCs)~\cite{HCS08}. In this section we describe the geometric and algebraic features of BSSCs, and use their structure to develop a reconstruction algorithm. For each $1\leq r\leq m$, subspace $H\in \cG(m,r;2)$, and symmetric $\bS_r \in \Sym(r)$ we will define a unit norm vector in $\C^N$ as follows. Let $\HI\in \cC_\cI$ be such that $\cs(\bH_{\cI}) = H$, as described in Section~\ref{S-SC}. Then $\HI$ is completed to an invertible $\bP:=\PI$ as in~\eqref{e-PI}. For all ${\bf b}, \ba \in \F_2^m$ define
\begin{equation}\label{e-bssc}
    \bw_{\scriptsize{\bf b}}^{\scriptsize{H,\bS_r}}(\ba) = \frac{1}{\sqrt{2^r}}i^{\scriptsize{\ba\T\bP^-\T\bS\bP^{-1}\ba+2{\bf b}\T\bP^{-1}\ba}}f({\bf b},\bP^{-1}\ba,r),
\end{equation}
where $\bS\in \Sym(m;2)$ is the matrix with $\bS_r$ on the upper left corner and 0 elsewhere, $f$ is as in~\eqref{e-f}, and the arithmetic in the exponent is done modulo 4. 
To avoid heavy notation however we will omit the upper scripts. Then we define a \emph{binary subspace chirp} to be 
\begin{equation}
\bw_{\scriptsize{\bf b}} := \big[\bw_{\scriptsize{\bf b}}(\ba)\big]_{\scriptsize\ba\in \F_2^m}\in \C^N.
\end{equation}
Note that when $r=m$ we have $\bP = \bI_m$ and $f$ is the identically 1 function. Thus, we obtain the \emph{binary chirps} \cite{HCS08}.

Directly from the definition (and the definition of $f$) it follows that $\bw_{\scriptsize{\bf b}}(\ba)\neq 0$ precisely when ${\bf b}$ and $\bP^{-1}\ba$ coincide in their last $m-r$ coordinates. 
Making use of the structure of $\bP$ as in \eqref{e-PI} we may conclude that $\bw_{\scriptsize{\bf b}}(\ba) \neq 0$ iff
\begin{equation}\label{e-bmr}
    \wHI^{\sf T}\ba = {\bf b}_{m-r},
\end{equation}
where ${\bf b}_{m-r}\in \F_2^{m-r}$ consists of the last $m-r$ coordinates of ${\bf b}$. 
Note that since $\rk\HI = r$ it follows that~\eqref{e-bmr} has $2^r$ solutions, and this in turn implies that $\bw_{\scriptsize{\bf b}}$ has $2^r$ non-zero entries. In particular $\wb$ is a unit norm vector. 
Concretely, making use~\eqref{e-PI1} we see that the solution space of~\eqref{e-bmr} is given by 
\begin{equation}\label{e-bmr1}
    \{\widetilde{\bx}:=\bI_{\widetilde{\cI}}{\bf b}_{m-r} + \HI\bx\mid \bx\in \F_2^r\}.
\end{equation}
We say that the rank $r$ determines the \emph{sparsity} of $\wb$ and the subspace $H$ (equivalently $\HI$) determines the \emph{on-off pattern} of $\wb$.
\begin{rem}\label{R-emb}
Fix a subspace chirp $\wb$, and write ${\bf b}\T = [{\bf b}_r^{\sf T}\,\,\,\, {\bf b}^{\sf T}_{m-r}]$. Then $\wb(\ba) \neq 0$ iff $\ba$ is as in \eqref{e-bmr1} for some $\bx\in \F_2^r$. Making use of \eqref{e-PIT} and \eqref{e-PI1} we obtain
\begin{equation}\label{e-p1a1}
    \bP^{-1}\ba = \left[ \!\!\!\begin{array}{c}\bx \\ {\bf b}_{m-r}\end{array}\!\!\!\right],
\end{equation}
and as a consequence $\ba\T\bP^-\T\bS\bP^{-1}\ba = \bx\T\bS_r\bx$ where $\bS_r$ is the (symmetric) upper-left $r\times r$ block of $\bS$.
Thus the nonzero entries of $\wb$ are of the form
\begin{equation}\label{e-p1a}
    \wb(\bx) = \frac{(-1)^{\wt(\scriptsize{{\bf b}_{m-r})}}}{\sqrt{2^r}}i^{\scriptsize{\bx\T\bS_r\bx} + 2{\bf b}\T{\!\!\!_r}\bx}
\end{equation}
for $\bx\in\F_2^r$. Note that there is a slight abuse of notation where we have identified $\bx$ with $\bP^{-1}\ba$ (thanks to \eqref{e-p1a1} and the fact that ${\bf b}$ is fixed). 
Above, the function $\wt(\sbt)$ is just the Hamming weight which counts the number of non-zero entries in a binary vector. 
We conclude that the \emph{on-pattern} of a rank $r$ binary subspace chirp is just a binary chirp in $2^r$ dimensions; compare \eqref{e-p1a} with~\cite[Eq.~(5)]{HCS08}. 
It follows that all lower-rank chirps are embedded in $2^m$ dimensions, which along with all the chirps in $2^m$ dimensions yield all the binary subspace chirps. As discussed, the embeddings are determined by subspaces.
\end{rem}

\subsection{Algebraic Structure of BSSCs}\label{SS-AS}

In what follows we fix a rank $r$, invertible $\bP$, and symmetric $\bS$. Recall that $\bS$ contains a $r\times r$ symmetric in its upper left corner and 0 otherwise. 
Next, let $\bF:=\bF_{\mathbf{\Omega}}(r)\FUS\bF_D(\bP\T)$ and let $\bG_{\bF} = \bG_D(\bP\T) \bG_U(\bS)\bG_{\mathbf{\Omega}}(r)$. 
With this notation we have $\Phi(\bG_\bF) = \bF$. Recall also that $\{\be_{\scriptsize\ba}\mid \ba\in \F_2^m\}$ is the standard basis of $\C^N$.
With a substitution $\bu:= \bP^{-1}\ba$ we have 
\begin{align}
    \bw_{\scriptsize{\bf b}} & = \frac{1}{\sqrt{2^r}}\sum_{\scriptsize\ba\in \F_2^m} \bw_{\scriptsize{\bf b}}(\ba)\be_{\scriptsize\ba} \nonumber \\
    & = \frac{1}{\sqrt{2^r}}\sum_{\scriptsize\bu\in \F_2^m}i^{\scriptsize{\bu\T\bS\bu}}(-1)^{\scriptsize{{\bf b}\T\bu}}f({\bf b},\bu,r)\be_{\scriptsize{\bb{Pu}}} \nonumber\\
     & = \bG_D(\bP\T)\cdot\frac{1}{\sqrt{2^r}} \sum_{\scriptsize\bu\in \F_2^m}i^{\scriptsize{\bu\T\bS\bu}}(-1)^{\scriptsize{{\bf b}\T\bu}}f({\bf b},\bu,r)\be_{\scriptsize{\bu}}\nonumber \\
    & = \bG_D(\bP\T)\bG_U(\bS)\bG_{\mathbf{\Omega}}(r)\bZ(m,r)\be_{\scriptsize{{\bf b}}} \label{e-eqq1}\\
    & = \bG_{\scriptsize{\bF}}\cdot\bZ(m,r)\be_{\scriptsize{{\bf b}}}\label{e-eqq2},
\end{align}
where~\eqref{e-eqq1} follows by~\eqref{e-Gomr}. Note that in~\eqref{e-eqq2}, the diagonal Pauli $\bZ(m,r)$ only ever introduces an additional sign on columns of $\bG_{\scriptsize{\bF}}$. Thus, the binary subspace chirp $\bw_{\scriptsize{{\bf b}}}$ is nothing else but the ${\bf b}$th column of $\bG_{\scriptsize{\bF}}$, up to a sign. However, as mentioned, for our practical purposes a sign (or even a complex unit) is irrelevant.

\begin{exa}[\mbox{Examples~\ref{OEx} and~\ref{OEx1} continued}] Let us consider the case $u = 0$, and for simplicity, let us set the symmetric $\bS$ to be the zero matrix\footnote{$\bG_U(\bS)$ does not affect the on-off pattern at all.}, so that $\bG_U(\bS)$ is the identity matrix. 
The on-off pattern of the resulting BSSCs is governed by the $r=2$ dimensional subspace $H = \{000\T,100\T,001\T,101\T\} = \cs(\bH_{\{1,3\}})$. 
The above argument tells us that these BSSCs are precisely the columns of $\bG_{u=0}$ from~\eqref{e-eex2}. 
One verifies this directly using the definition~\eqref{e-bssc}.
Moreover, the structure of the on-off patterns is completely determined by~\eqref{e-bmr1}. Indeed, we see in~\eqref{e-eex2} two on-off patterns: one determined by $H$ (if $\bI_{\widetilde{\cI}}{\bf b}_{m-r}\in H$)  and one determined by its coset\footnote{There are exactly $2 = 2^3/2^2$ cosets since $H$ has dimensions 2.} (if $\bI_{\widetilde{\cI}}{\bf b}_{m-r}\notin H$). 
In our specific case, we have $\cI = \{1,3\}$, and thus $\bI_{\widetilde{\cI}} = \bI_{\{2\}} = 010\T$. Thus $\bI_{\widetilde{\cI}}{\bf b}_{m-r}\in H$ iff ${\bf b}_{m-r} = 0$ iff ${\bf b} \in \{000\T,010\T,100\T,110\T\}$, which corresponds to columns $\{1,3,5,7\}$. 
Additionally, within each of these columns, the on-off pattern is again governed by $H$. 
Indeed, the non-zero entries in these columns are in positions/rows indexed by $H$, that is, $\{1,2,5,6\}$ -- precisely as described by~\eqref{e-bmr}. 
Since cosets form a partition it follows immediately that columns indexed by different cosets are orthogonal. Orthogonality of columns within each coset is a bit more delicate to see directly. 
We will further discuss the general structure of on-off patterns in Section~\ref{Sec-OO}.

\end{exa}

Equation~\eqref{e-bij} gives a one-to-one correspondence between canonical coset representatives and maximal stabilizers. Above we mentioned that BSSCs are columns of Clifford matrices parametrized by such coset representatives. The last piece of the puzzle is found by simultaneously diagonalizing the commuting matrices of a maximal stabilizer. We make this precise in the following.
\begin{theo}\label{T-ES}
Let $\bF$ and $\bG_{\scriptsize\bF}$ be as above. The set $\{\bw_{\scriptsize{\bf b}}~\mid{\bf b}\in \F_2^m\}$ consisting of the columns of $\bG_{\scriptsize\bF}$ is the common eigenspace of the maximal stabilizer $\bE(\Imr\bP\T,\,(\Imr\bS+\Imrr)\bP^{-1})$ from~\eqref{e-bij}.
\end{theo}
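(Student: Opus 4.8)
The plan is to reduce the claim to two facts: first, that each $\bw_{\scriptsize{\bf b}}$ is an eigenvector (with eigenvalue $\pm1$) of every element of $\cS := \bE(\Imr\bP\T,\,(\Imr\bS+\Imrr)\bP^{-1})$; second, that these $2^m$ eigenvectors, being columns of a unitary matrix, are pairwise orthogonal and hence span $\C^N$, so they constitute the full common eigenspace. The second fact is essentially free: by~\eqref{e-eqq2} we have $\bw_{\scriptsize{\bf b}} = \bG_{\scriptsize\bF}\bZ(m,r)\be_{\scriptsize{\bf b}}$, and $\bG_{\scriptsize\bF}\bZ(m,r)$ is a product of unitary matrices, hence unitary; its columns form an orthonormal basis of $\C^N$. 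So the work is entirely in the first fact.

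For the eigenvalue computation I would use the Clifford conjugation formula~\eqref{e-cliff}. The key observation is that $\cZ_N = \bE({\bf 0}_m,\bI_m)$ stabilizes the computational basis vectors: $\bE({\bf 0},{\bf b})\be_{\scriptsize\bc} = (-1)^{{\bf b}\T\bc}\be_{\scriptsize\bc}$ by~\eqref{e-XZ}, so $\be_{\scriptsize\bc}$ is a $\pm1$-eigenvector of each $\bD({\bf 0},\bw)$. Now $\bZ(m,r)\be_{\scriptsize{\bf b}}$ is still $\pm\be_{\scriptsize{\bf b}}$, so it remains a common eigenvector of $\cZ_N$; more precisely, $\bw_{\scriptsize{\bf b}}$ is, up to sign, $\bG_{\scriptsize\bF}\be_{\scriptsize{\bf b}}$, and since $\be_{\scriptsize{\bf b}}$ is a common eigenvector of the maximal stabilizer $\cZ_N$, the vector $\bG_{\scriptsize\bF}\be_{\scriptsize{\bf b}}$ is a common eigenvector of $\bG_{\scriptsize\bF}\,\cZ_N\,\bG_{\scriptsize\bF}^\dagger$. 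By~\eqref{e-cliff} and the fact that $\Phi(\bG_{\scriptsize\bF}) = \bF = \bF_{\mathbf{\Omega}}(r)\FUS\bF_D(\bP\T)$, this conjugated stabilizer is $\{\pm\bE(\bx\T[{\bf 0}\,\,\,\bI_m]\bF)\mid \bx\in\F_2^m\}$, i.e., it is $\bE(\bA',\bB')$ where $[\bA'\,\,\,\bB']$ is obtained from $[{\bf 0}\,\,\,\bI_m]$ by right multiplication by $\bF$. The main computational step is therefore to check that
\begin{equation}
[\,{\bf 0}_m\,\,\,\,\bI_m\,]\,\bF_{\mathbf{\Omega}}(r)\FUS\bF_D(\bP\T) = [\,\Imr\bP\T\,\,\,\,(\Imr\bS+\Imrr)\bP^{-1}\,],
\end{equation}
which is a direct block multiplication using $\bF_D(\bP\T) = \fourmat{\bP\T}{{\bf 0}}{{\bf 0}}{\bP^{-1}}$, $\FUS = \fourmat{\bI}{\bS}{{\bf 0}}{\bI}$, and $\bF_{\mathbf{\Omega}}(r) = \fourmat{\Imrr}{\Imr}{\Imr}{\Imrr}$, together with $\Imr\bS = \bS$ (since $\bS$ has support only in the top-left $r\times r$ block) and $\Imr + \Imrr = \bI_m$. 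This identity is exactly the map in~\eqref{e-bij}, so it simultaneously identifies the stabilizer with the one named in the theorem.

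Finally I would assemble the pieces: $\bG_{\scriptsize\bF}\,\cZ_N\,\bG_{\scriptsize\bF}^\dagger = \cS$ up to signs, the columns $\bw_{\scriptsize{\bf b}}$ are (up to sign) $\bG_{\scriptsize\bF}\be_{\scriptsize{\bf b}}$, each $\be_{\scriptsize{\bf b}}$ is a common $\pm1$-eigenvector of $\cZ_N$, hence each $\bw_{\scriptsize{\bf b}}$ is a common eigenvector of $\cS$; orthonormality of the $2^m$ columns forces these to exhaust the joint eigenspace decomposition of $\cS$. The only subtlety to be careful about is bookkeeping of signs and of the factor $\bZ(m,r)$: since $\cl_N$ and the stabilizer formalism are taken modulo scalars (and the theorem speaks of the common eigenspace as a set of lines), the $\pm$ ambiguities in~\eqref{e-cliff} and the sign twist from $\bZ(m,r)$ do not affect the statement, but I would remark on this explicitly rather than suppress it. I expect this sign/scalar bookkeeping, rather than any genuine difficulty, to be the main thing a careful reader will want spelled out; the algebra of the block product is routine.
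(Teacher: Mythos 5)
Your proposal is correct and follows essentially the same route as the paper: both arguments push a known simultaneous eigenbasis forward through the Clifford conjugation formula~\eqref{e-cliff} and reduce the theorem to the block computation of a row of the symplectic matrix, which is exactly the map~\eqref{e-bij}. The only cosmetic difference is that you start from $\bI_N$ stabilized by $\cZ_N$ and compute $[\,{\bf 0}_m\;\;\bI_m\,]\bF$ in one pass, whereas the paper first identifies $\bE(\Imr,\Imrr)$ as the stabilizer of $\bG_{\mathbf{\Omega}}(r)$ and then computes $[\Imr\;\;\Imrr]\bF_U(\bS)\bF_D(\bP\T)$; since $[\,{\bf 0}_m\;\;\bI_m\,]\bF_{\mathbf{\Omega}}(r) = [\Imr\;\;\Imrr]$, these are the same calculation.
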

\begin{proof}
Consider the matrix $\bG:=\bG_{\scriptsize{\bF}}$ parametrized by the symplectic matrix $\bF$, and recall that $\wb$ is the ${\bf b}$th column of $\gf$. It follows from Remark \ref{R-Gd} that the columns of $\bG$ are the eigenspace of $\bE(\bx,\by)$ iff 
\begin{equation}
    \bG^\dagger \bE(\bx,\by)\bG = \pm \bE([\bx,\by]\T\bF^{-1})
\end{equation}
is diagonal.
%
%
%
%
Recall also that $\bE(\bx,\by)$ is diagonal iff $\bx = {\bf 0}$, and observe that $\bF_{\mathbf{\Omega}}(r)^{-1} = \bF_{\mathbf{\Omega}}(r)$. 
Thus, $\bG_{\mathbf{\Omega}}(r)$ will be the common eigenspace of the maximal stabilizer $\cS$ iff $\pm\bE([\bx\,\,\by]^{\T}\bF_{\mathbf{\Omega}}(r))$ is diagonal for all $\bE(\bx,\by)\in \cS$. 
Then it is easy to see that such maximal stabilizer is $\bE(\Imr,\,\Imrr)$. Next, if $\bw$ is an eigenvector of $\bE(\bc)$ then
\begin{align*}
   \bG\bw = \pm\bG\bE(\bc)\bw & = \pm\bG\bE(\bc)\bG^\dagger\bG\bw = \pm\bE(\bc\T\Phi(\bG))\bG\bw
   \end{align*}
implies that $\bG\bw$ is an eigenvector of $\bE(\bc\T\Phi(\bG))$. The proof is concluded by computing $[\Imr\,\,\,\,\Imrr]\bF_U(\bS)\bF_D(\bP\T)$.
\end{proof}
\begin{rem}
Note that for $r=m$ one has $\bE(\Imr,\,\Imrr) = \bE(\bI_m,\, {\bf 0}_m)$ and $\bG_{\mathbf{\Omega}}(r) = \bH_N$. Thus the above theorem covers the well-known fact that $\bH_N$ is the common eigenspace of the maximal stabilizer $\cX_N = \bE(\bI_m,\, {\bf 0}_m)$. 
It is also well-known that the standard basis of $\C^N$ (that is, $\bI_N$) is the common eigenspace of the maximal stabilizer $\cZ_N= \bE({\bf 0}_m,\bI_m)$ of diagonal Paulis. This is of course consistent with the aforesaid fact since $[{\bf 0}_m\,\,\,\,\bI_m]\mathbf{\Omega} = [\bI_m\,\,\,\,{\bf 0}_m]$ and $\bH_N = \Phi^{-1}(\mathbf{\Omega})$.
In this extremal case we also have $\bP_{\cI} = \bI_m$ and $\widetilde{\bS_r} = \bS\in \Sym(m;2)$. So the above theorem also covers~\cite[Lem.~11]{CRCP19} which (in the language of this paper) says  that the common eigenspace of $\bE(\bI_m,\,\bS)$ is $\bG_U(\bS)\bH_N$.
\end{rem}
\begin{rem}
Theorem~\ref{T-ES} is a closed form realization of a more general fact. 
Let $\cS$ be a maximal stabilizer and let $S = \rs [\bA\,\,\,\, \bB] \subset \Fm$ be its corresponding isotropic subspace. 
Consider also the diagonal Paulis $\cZ_N$ and its corresponding subspace $Z_N = \rs [{\bf 0}_m\,\,\,\,\bI_m]$. 
Then, by~\cite[Alg.~1]{RCKP18} there exists $\bG\in \cl_N$ such that $\bG\cS\bG^\dagger = \cZ_N$. 
In other words, $\bG^\dagger$ simultaneously diagonalizes $\cS$, and moreover, the respective diagonal is a Pauli. 
In the symplectic domain, it follows by~\cite[Thm.~25]{RCKP18} that there are precisely $2^{m(m+1)/2}$ symplectic solutions to the equation $[\bA\,\,\,\, \bB]\bF = [{\bf 0}_m\,\,\,\,\bI_m]$.
\end{rem}

\begin{cor}\label{C-size}
The set of all binary subspace chirps $\bssc$ consists of $2^m\prod_{r=1}^m(2^r+1)$
unit norm vectors, while the set of all binary chirps $\chirp$ consists of $2^{m(m+3)/2}$ unit norm vectors. The ratio of codebook cardinalities is $|\bssc|/|\chirp|\approx 2.384$.
\end{cor}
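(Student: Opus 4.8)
The plan is to count both collections directly by using the parametrizations established earlier, then take the ratio. For $\bssc$, the formal definition~\eqref{e-bssc} tells us that a binary subspace chirp of rank $r$ is specified by three independent pieces of data: a subspace $H\in\cG(m,r;2)$ determining the on-off pattern, a symmetric matrix $\bS_r\in\Sym(r;2)$ determining the chirp part, and a vector ${\bf b}\in\F_2^m$. However, I need to be careful about whether distinct parameter triples yield distinct vectors. The cleanest route is to invoke Theorem~\ref{T-ES} together with the bijection~\eqref{e-bij}: the rank $r$ and the subspace $H$ (equivalently, the canonical $\bP_\cI$) together with $\bS_r$ determine a unique maximal stabilizer $\bE(\Imr\bP_\cI\T,\,(\Imr\widetilde{\bS_r}+\Imrr)\bP_\cI^{-1})$, hence a unique Clifford matrix $\bG_\bF$, and the $2^m$ columns of $\bG_\bF$ (indexed by ${\bf b}\in\F_2^m$, up to an irrelevant sign from $\bZ(m,r)$) are exactly the BSSCs sharing that rank, subspace, and $\bS_r$. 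So I would first argue that the count splits as a sum over $r$ of (number of stabilizers with diagonal-part rank $m-r$) times $2^m$ columns each.

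Next I would assemble the numerology. For fixed $r$, the number of choices of $H$ is $|\cG(m,r;2)|$ and the number of choices of $\bS_r$ is $|\Sym(r;2)|=2^{r(r+1)/2}$, and these range over disjoint coset representatives by Lemma~\ref{L-RightCoset}. The key simplification is that by Remark~\ref{R-states}, $\sum_{r=0}^m |\cG(m,r;2)|\cdot 2^{r(r+1)/2}=|\Sp(2m;2)/\cP|=\prod_{i=1}^m(2^i+1)=|\cL(2m,m)|$; this is precisely the count of maximal stabilizers. Multiplying by the $2^m$ columns gives $|\bssc|=2^m\prod_{r=1}^m(2^r+1)$. (The product can be written from $r=1$ since the $i=0$ factor would be $2^0+1=2$ — one must double-check the indexing so that the empty-rank $r=0$ term, a single all-nonzero... actually rank-$0$ BSSC, is accounted for; the product $\prod_{i=1}^m(2^i+1)$ already includes it, so no separate term is needed.) For $\chirp$, rank is fixed at $r=m$, so $\bP=\bI_m$, $f\equiv 1$, and a binary chirp is determined by $\bS\in\Sym(m;2)$ and ${\bf b}\in\F_2^m$, giving $2^{m(m+1)/2}\cdot 2^m=2^{m(m+3)/2}$; one should note these are genuinely distinct vectors, which follows again from Theorem~\ref{T-ES} applied with $r=m$ (distinct $\bS$ give distinct stabilizers $\bE(\bI_m,\bS)$, and the columns within one Clifford are orthonormal hence distinct).

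For the ratio, I would compute
\begin{equation}
\frac{|\bssc|}{|\chirp|}=\frac{2^m\prod_{r=1}^m(2^r+1)}{2^{m(m+3)/2}}=\frac{\prod_{r=1}^m(2^r+1)}{2^{m(m+1)/2}}=\prod_{r=1}^m\left(1+2^{-r}\right),
\end{equation}
since $2^{m(m+1)/2}=\prod_{r=1}^m 2^r$. As $m\to\infty$ this converges to the infinite product $\prod_{r=1}^\infty(1+2^{-r})$, whose numerical value is approximately $2.3842\ldots$ (this is a $q$-Pochhammer-type constant, $(-1/2;1/2)_\infty$). I expect the main obstacle to be the bookkeeping in the first step — rigorously justifying that the map (rank, $H$, $\bS_r$, ${\bf b}$)~$\mapsto$~BSSC is a bijection onto $\bssc$ after quotienting by scalars, i.e. that no two distinct canonical parameter sets produce the same line. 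This is exactly what Theorem~\ref{T-ES} and the injectivity of~\eqref{e-bij} are designed to deliver, so the argument is essentially an assembly of already-proven facts rather than new work; the analytic estimate of the infinite product is routine.
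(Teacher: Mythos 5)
Your argument is correct and follows essentially the same route as the paper: the paper also counts $\bssc$ by invoking Lemma~\ref{L-RightCoset} and the bijection~\eqref{e-bij} between canonical coset representatives and maximal stabilizers, whose number is $\prod_{i=1}^m(2^i+1)$ by~\eqref{e-Lan}, with $2^m$ columns each, and counts $\chirp$ via the $r=m$ specialization exactly as you do. Your auxiliary sum $\sum_{r=0}^m 2^{r(r+1)/2}\binom{m}{r}_2=\prod_{r=1}^m(2^r+1)$ and the limit $\prod_{r=1}^\infty(1+2^{-r})\approx 2.384$ also appear in the paper's discussion immediately following the corollary.
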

\begin{proof}
By Lemma~\ref{L-RightCoset} we need only consider coset representatives~\eqref{e-can}, which are in bijection~\eqref{e-bij} with the set of maximal stabilizers $\cL(2m,m)$. Now the first statement follows by~\eqref{e-Lan}.

Next, recall that the binary chirps correspond to those binary subspace chirps for which $r=m$, and thus, $\bP = \bI_m$. In other words they are parametrized by a symmetric matrix $\bS\in\Sym(m;2)$ and ${\bf b}\in \F_2^m$.
\end{proof}

The size of the codebook $\bssc$ can be also straightforwardly deduced from the characterization of Lemma~\ref{L-RightCoset}. Indeed, each rank zero BSSC has precisely $2^0 = 1$ non-zero entry. Thus there are $2^m$ rank zero BSSCs, parametrized only by a column index ${\bf b}\in\F_2^m$. On the other hand, rank $r$ BSSCs are characterized by $\bS\in \Sym(r;2)$ and $H\in \cG(m,r;2)$. Of course, $|\Sym(r;2)| = 2^{r(r+1)/2}$. Whereas the size of the Grassmannian is given by the 2-binomial coefficient, that is
\begin{equation}
    |\cG(m,r;2)| = {m\choose r}_2 = \prod_{i=0}^{r-1}\frac{1-2^{m-i}}{1-2^{i+1}}.
\end{equation}
Thus, we have
\begin{align}\label{e-BSSCsSumProd}
    |\bssc| = 2^m\cdot \sum_{r= 0}^m2^{r(r+1)/2}{m\choose r}_2 = 2^m\cdot\prod_{r=1}^m(2^r+1),
\end{align}
where the last equality is simply the 2-binomial theorem~\cite{andrews}.
\begin{cor}\label{C-states}
Each binary subspace chirp is a stabilizer state. The converse is also true.
\end{cor}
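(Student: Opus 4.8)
The plan is to prove both directions of Corollary~\ref{C-states} by leveraging Theorem~\ref{T-ES} together with the bijection~\eqref{e-bij}. For the forward direction, I would argue as follows. By~\eqref{e-eqq2}, an arbitrary BSSC $\bw_{\scriptsize{\bf b}}$ is (up to a sign) the ${\bf b}$th column of the Clifford matrix $\bG_{\scriptsize\bF}$, where $\bF = \bF_{\mathbf{\Omega}}(r)\FUS\bF_D(\bP\T)$. By Theorem~\ref{T-ES}, the columns of $\bG_{\scriptsize\bF}$ form the common eigenspace of the maximal stabilizer $\cS := \bE(\Imr\bP\T,\,(\Imr\bS+\Imrr)\bP^{-1})$. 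Since $\cS$ is a maximal stabilizer, $|\cS| = 2^m$, and so $\dim V(\cS) = 2^m/|\cS| = 1$ by the dimension formula recorded after~\eqref{e-state}. More precisely, each column $\bw_{\scriptsize{\bf b}}$ is a simultaneous eigenvector of all elements of $\cS$, hence spans $V(\cS_{\bd})$ for the appropriate sign vector $\bd\in\F_2^m$ (recall the resolution of identity~\eqref{e-projDD} and Remark~\ref{R-char}); by definition, $V(\cS_{\bd})$ is a stabilizer state (with the abuse of notation flagged in Remark~\ref{R-char}). Thus every BSSC is a stabilizer state.

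For the converse, I would use a counting argument combined with the same bijection. Let $\bv$ be an arbitrary stabilizer state, i.e.\ the unit-norm generator of $V(\cS_{\bd})$ for some maximal stabilizer $\cS$ and some sign vector $\bd$. By the one-to-one correspondence between maximal stabilizers and the Lagrangian Grassmannian $\cL(2m,m)$ (established just after~\eqref{e-Eab1}), and by the bijection~\eqref{e-bij} between $\Sp(2m;2)/\cP$ and $\cL(2m,m)$, the stabilizer $\cS$ equals $\bE(\Imr\bP_\cI\T,\,(\Imr\widetilde{\bS_r}+\Imrr)\bP_\cI^{-1})$ for a unique canonical coset representative $\bF_O(\bP_\cI,\bS_r)$. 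By Theorem~\ref{T-ES}, the common eigenspace of this stabilizer consists precisely of the BSSCs $\{\bw_{\scriptsize{\bf b}}\mid {\bf b}\in\F_2^m\}$ associated to $(r,H,\bS_r)$. Since $\{V(\cS_{\bd})\mid \bd\in\F_2^m\}$ is an orthonormal basis of $\C^N$ (the last Remark of Subsection~\ref{Sec-HW}) and likewise the $2^m$ BSSCs $\bw_{\scriptsize{\bf b}}$ form an orthonormal basis, the state $\bv$ must coincide (up to a scalar of modulus one, which is irrelevant for Grassmannian lines) with one of the $\bw_{\scriptsize{\bf b}}$. Hence every stabilizer state is a BSSC.

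Alternatively — and perhaps more cleanly — the converse can be phrased as a cardinality count: the total number of stabilizer states (counting all sign choices over all maximal stabilizers) is $2^m\cdot|\cL(2m,m)| = 2^m\prod_{i=1}^m(2^i+1)$, which by Corollary~\ref{C-size} (see also~\eqref{e-BSSCsSumProd}) equals $|\bssc|$. Since the forward direction exhibits an injection of BSSCs into stabilizer states, equality of the (finite) cardinalities forces this injection to be a bijection, giving the converse for free. The only subtlety I anticipate is bookkeeping the sign/phase ambiguity consistently: BSSCs are defined as honest vectors in $\C^N$ via~\eqref{e-bssc}, whereas stabilizer states are genuinely Grassmannian lines, so one must either pass to lines throughout or track the $\bZ(m,r)$-induced signs from~\eqref{e-eqq2} and the character $\chi_{\bd}$ from Remark~\ref{R-char} carefully. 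This is routine but is the one place where a careless argument could double-count or miss states, so I would treat it explicitly rather than gloss over it.
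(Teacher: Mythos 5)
Your proof is correct, but it routes around the paper's argument in a way worth noting. The paper's own proof is two lines: it invokes the equivalent characterization of stabilizer states as the orbit of $\be_{\scriptsize{\bf 0}}$ under $\cl_N$ (citing external work), so that the forward direction is immediate from~\eqref{e-eqq2}, and it disposes of the converse purely ``due to cardinalities.'' You instead work with the definitional eigenspace characterization: the forward direction goes through Theorem~\ref{T-ES} and Remark~\ref{R-char} to exhibit each $\bw_{\scriptsize{\bf b}}$ as the generator of some $V(\cS_{\bd})$, and your primary converse argument is structural rather than numerical --- given any maximal stabilizer, the bijection~\eqref{e-bij} produces the canonical coset representative, Theorem~\ref{T-ES} identifies its $2^m$ orthonormal eigenstates with the $2^m$ columns of the associated Clifford matrix, and the orthonormal-basis count forces every $V(\cS_{\bd})$ to be one of them. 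What this buys you is self-containedness (no appeal to the orbit characterization) and constructiveness: your converse tells you \emph{which} BSSC realizes a given stabilizer state, which is exactly the fact the reconstruction algorithms of Section~\ref{Sec-Recon} exploit. What the paper's route buys is brevity. Your closing remark on the phase bookkeeping is well taken: the cardinality argument silently needs that distinct BSSCs span distinct lines, which follows from Theorem~\ref{T-ES} (distinct coset representatives give distinct maximal stabilizers, and distinct columns of a unitary are orthogonal), and making that explicit is a genuine improvement over the paper's one-line dismissal.
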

\begin{proof}
Stabilizer states can be defined equivalently as the orbit of $\be_{\scriptsize{\bf 0}}$ under the action of $\cl_N$; see~\cite{AG04} for instance. Then the first statement follows by~\eqref{e-eqq2}. The converse is true due to cardinalities.
\end{proof}
\begin{cor}\label{C-SZN}
Let $\cS$ be a maximal stabilizer. Then the stabilizer state $V(\cS)$ is a rank $r$ BSSC iff $|\cS\cap \cZ_N| = 2^{m-r}$.
\end{cor}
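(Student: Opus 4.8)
The plan is to chase the definitions through the bijection~\eqref{e-bij} between canonical coset representatives and maximal stabilizers, and to compute $\cS\cap\cZ_N$ explicitly for $\cS = \bE(\Imr\bP\T,\,(\Imr\bS+\Imrr)\bP^{-1})$, the stabilizer attached by Theorem~\ref{T-ES} to the BSSC $\wb$ of rank $r$. By Theorem~\ref{T-ES}, $\wb$ is a column of $\bG_{\bF}$ and hence a member $V(\cS_{\bd})$ of the common eigenbasis of $\cS$ for the appropriate sign pattern $\bd$; its rank (sparsity) is $r$. So the content of the corollary is the purely binary statement that $|\cS\cap\cZ_N| = 2^{m-r}$, where $\cZ_N = \bE({\bf 0}_m,\bI_m)$ is the group of diagonal Paulis.

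First I would translate the intersection into the symplectic picture via $\Psi_N$: since $\Psi_N$ restricted to a stabilizer is an isomorphism and $\cZ_N$ is a stabilizer, $|\cS\cap\cZ_N| = |\,\Psi_N(\cS)\cap Z_N\,|$, where $Z_N = \rs[{\bf 0}_m\,\,\,\,\bI_m]$. By~\eqref{e-bij}, $\Psi_N(\cS) = \rs\,[\Imr\bP\T \,\,\,\,(\Imr\bS+\Imrr)\bP^{-1}]$. An element $\bx\T[\Imr\bP\T\,\,\,\,(\Imr\bS+\Imrr)\bP^{-1}]$, $\bx\in\F_2^m$, lies in $Z_N$ exactly when its first $m$ coordinates vanish, i.e. when $\bx\T\Imr\bP\T = {\bf 0}$. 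Since $\bP$ is invertible this is equivalent to $\bx\T\Imr = {\bf 0}$, i.e. the first $r$ coordinates of $\bx$ are zero. The solution set is therefore an $(m-r)$-dimensional subspace of $\F_2^m$, so the intersection has exactly $2^{m-r}$ elements. This proves the forward direction and, because the only datum of $\wb$ used was its rank, the rank $r$ of any BSSC is recovered as $m - \log_2|\cS\cap\cZ_N|$.

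For the converse, suppose $\cS$ is an arbitrary maximal stabilizer with $|\cS\cap\cZ_N| = 2^{m-r}$. By Corollary~\ref{C-states} the stabilizer state $V(\cS)$ is a BSSC, say of rank $s$; by the forward direction just proved, $|\cS\cap\cZ_N| = 2^{m-s}$, hence $s = r$. (One must be slightly careful here: the maximal stabilizer associated to a BSSC in~\eqref{e-bij} is in canonical form $\bE(\Imr\bP_\cI\T,(\Imr\widetilde{\bS_r}+\Imrr)\bP_\cI^{-1})$, while a general $\cS$ need only be conjugate to such a form under the Hadamard-free group $\cG$; I would note that conjugation by $\cG$ preserves $\cZ_N$ only up to the permutation/diagonal part, but it does preserve the subgroup structure well enough that $|\cS\cap\cZ_N|$ is an invariant of the coset of $\cS$ in $\Sp(2m;2)/\cP$, which is all that is needed since the rank is itself such an invariant.) Alternatively, and more cleanly, the map $\cS\mapsto|\cS\cap\cZ_N|$ is constant on the orbit structure that~\eqref{e-bij} makes explicit, so both directions follow from the single computation above.

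The main obstacle I anticipate is the converse bookkeeping: making sure that "rank of the BSSC $V(\cS)$" is genuinely the same $r$ that appears in the canonical representative of the coset of $\Phi(\bG)$ for \emph{every} $\cG\bG$ with columns equal to that BSSC, rather than only for the distinguished representative $\bG_{\bF}$ of Theorem~\ref{T-ES}. This is handled by the fact (already established via Lemma~\ref{L-RightCoset} and the bijection~\eqref{e-bij}) that $r$ is a well-defined function of the coset and that two Clifford matrices sharing a column sit in the same right $\cP$-coset up to the phase ambiguities we are ignoring; once that is invoked the argument is just the linear-algebra count above.
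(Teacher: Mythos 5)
Your proposal is correct and follows essentially the same route as the paper: invoke Corollary~\ref{C-states} to see that $V(\cS)$ is a BSSC, identify its stabilizer via Theorem~\ref{T-ES} and the bijection~\eqref{e-bij}, and count the diagonal Paulis, which the paper does by pointing at~\eqref{e-ES1} where the zero block in the ``$\bA$'' part makes the $2^{m-r}$ count immediate. You simply carry out that count explicitly (the condition $\bx\T\Imr\bP\T={\bf 0}$) and are somewhat more careful than the paper about why the count is an invariant of the coset rather than of the chosen canonical representative.
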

\begin{proof}
By Corollary~\ref{C-states} we know that $V(\cS)$ is a BSSC of rank $r$, which in turn is stabilized by the maximal stabilizer of Theorem~\ref{T-ES}. Such stabilizer has precisely $2^{m-r}$ diagonal Paulis; see also~\eqref{e-ES1}.
\end{proof}

We mentioned that the extremal case $r = m$ gives the codebook $\chirp$. Before discussing general on-off patterns, we consider the lower-end extremal cases $r = 0,1$.

\begin{rem}\label{R-rank0}
Let $r = 0$. In this case we again have $\bP = \bI_m$ and $\bS = {\bf 0}_m$. 
In addition $f(\bv,\bw,0) = \delta_{\bv,\bw}$.
Thus, from~\eqref{e-bssc} we see that $\wb(\ba) \neq 0$ iff $\ba = {\bf b}$, in which case we have $\wb(\ba) = (-1)^{\wt({\bf b})}$. 
This can also be seen from~\eqref{e-eqq2}. Indeed, since $\bG_\mathbf{\Omega}(0) = \bI_N$ we have $\bG_\bF = \bI_N$. Note also that $\bZ(m,0) = \sigma_z\otimes\cdots\otimes\sigma_z = \bE({\bf 0},{\bf 1})$ is the common eigenspace of the maximal stabilizer $\bE(\bI_m,\bI_m)$, as established by Theorem~\ref{T-ES}.
\end{rem}
\begin{rem}\label{R-rank1}
Let $r = 1$. In this case, either $\bS = {\bf 0}_m$ or $\bS = \be_1\be_1\T$, where $\be_1\in \F_2^m$ is the first standard basis vector. 
It follows that, up to a Pauli matrix, $\bG_U(\bS)$ is either $\bI_N$ or the \emph{transvection} $(\bI_N + iZ_1)/\sqrt{2}$ where $Z_1 = \bE({\bf 0}, \be_1)$ has $\sigma_z$ on the first qubit and identity elsewhere; see also~\eqref{e-vvt}. Similarly $\bG_\mathbf{\Omega}(1) = (X_1 + Z_1)/\sqrt{2}$ is another transvection. 
Thus, rank one BSSCs are columns of transvections, permuted by some Clifford permutation $\bG_D(\bP)$. See~\cite{PRTC20,pllaha2021decomposition} for more on transvections transvections.
\end{rem}

\begin{figure}
\centering
\subfloat[Subfigure 1 list of figures text][$r=0$.]{
\includegraphics[width=0.1\columnwidth]{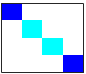}
\label{fig:subfig1}}
\hspace*{0.22 in}
\subfloat[rank1][$r=1$.
]{
\includegraphics[width=0.5\columnwidth]{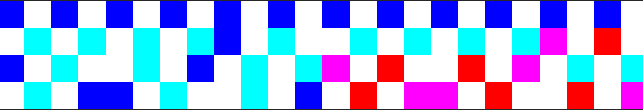}
\label{fig:subfig2}}
\qquad
\subfloat[Subfigure 3 list of figures text][$r=2$ (BCs).
]{
\includegraphics[width=0.66\columnwidth]{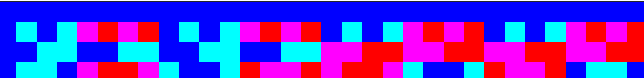}
\label{fig:subfig3}}
\caption{BSSCs in $N=4$ dimensions. White = 0, Blue = 1, Cyan = -1, Red = $i$, Magenta = $-i$.}
\end{figure}

\begin{exa}
Let $m=2$. There are $3 = {2\choose 1}_2$ one dimensional spaces in
$\F_2^m$ and there are two $1\times 1$ symmetric matrices. Thus there
are $2^2\cdot 3\cdot 2 = 24$ BSSCs of rank $r = 1$ in $N = 2^m = 4$
dimensions, as depicted in Figure~\ref{fig:subfig2}. Furthermore, there are eight $2\times 2$ symmetric matrices, and these yield $32 = 2^2\cdot 8$ BCs, as depicted in Figure~\ref{fig:subfig3}. 
Along with $4 = 2^2$ BSSCs of rank 0 depicted on Figure~\ref{fig:subfig1}, we have in total $60 = 4 + 24 + 32 = 4\cdot 3\cdot 5$ BSSCs in $N = 4$ dimensions, as given by~\eqref{e-BSSCsSumProd}.
\end{exa}

\subsection{Structure of On-Off Patterns}\label{Sec-OO}
As discussed, for $\bS_r\in \Sym(r)$ and $H \in \cG(m,r;2)$ we obtain a unitary matrix 
\begin{equation}
    \bU_{H,\scriptsize\bS_r}(\ba,{\bf b}) = \big[\wb(\ba)\big]_{\scriptsize{\ba,{\bf b}}}\in \U(N).
\end{equation}
We will omit the subscripts when the context is clear. We know from~\eqref{e-eqq2} that such a matrix is, up to a diagonal Pauli, an element of $\cl_N$. 
The subspace $H$ determines the \emph{sparsity} of $\bU$. Indeed, we see from \eqref{e-bmr1} that the on-off pattern is supported either by $H$ or by a coset of it. 
Thus, the on-off patterns of different columns are either equal or disjoint. It also follows that in $\bU$ there are $2^{m-r}$ different on-off patterns, each of which repeat $2^r$ times. 

In~\cite{CHJ10} it was shown that BCs form a group under coordinate-wise multiplication. Whereas, we can immediately see that this is not the case for BSSCs. 
For instance, if one considers two BSSCs with disjoint on-off patterns then they coordinate-wise multiply to ${\bf 0}\in\C^N$. When two BSSCs have the same on-off pattern the coordinate-wise multiplication can be determined as follows. 
Let $\bw_1$ and $\bw_2$ be two columns of $\bU$ with the same on-off pattern, indexed by ${\bf b}_1$ and ${\bf b}_2$ respectively. 
Let $\rr = m-r$. In such case, again by~\eqref{e-bmr1}, we must have ${\bf b}_{1,\rr} = {\bf b}_{2,\rr}$, that is, they are equal in their last $\rr$ coordinates. 
Recall also that the non-zero coordinates of a BSSC are determined by \eqref{e-p1a1}. We have that
\begin{equation}
    2^r\bw_1(\ba)\bw_2(\ba) = (-1)^{\scriptsize{\bx\T}\bS\bx + ({\bf b}_{1,r}+{\bf b}_{2,r})\T\bx},
\end{equation}
where $\bx\in \F_2^r$ is such that 
\begin{equation}
        \bP^{-1}\ba = \left[ \!\!\!\begin{array}{c}\bx \\ {\bf b}_{1,\rr}\end{array}\!\!\!\right] = \left[ \!\!\!\begin{array}{c}\bx \\ {\bf b}_{2,\rr}\end{array}\!\!\!\right].
\end{equation}
The matrix $\bP$ above corresponds to $H$ as usual.
Next, the map $\bx\longmapsto \bx\T\bS\bx$ is additive modulo 2, and thus it is of form $\bx\longmapsto {\bf b}_{\tiny{\bS}}^{\!\sf T}\bx$ for some ${\bf b}_{\tiny{\bS}}\in \F_2^m$.
It follows that 
\begin{equation}\label{e-OnOff}
    2^r\bw_1(\ba)\bw_2(\ba) = (-1)^{\scriptsize{ ({\bf b}_{\tiny\bS}+{\bf b}_{1,r}+{\bf b}_{2,r})\T\bx}}.
\end{equation}
Then it is easy to see that the right-hand-side of \eqref{e-OnOff} is, up to a sign, a column of $\bG_D(\bP\T)\bG_\mathbf{\Omega}(r)$.

With a similar argument, when two BSSCs with the same on-off pattern, but different symmetric matrices $\bS_1$ and $\bS_2$, are coordinate-wise multiplied, we obtain, up to sign, a column of $\bG_D(\bP\T)\bG_U(\bS_1+\bS_2)\bG_\mathbf{\Omega}(r)$.
In all cases, the ``up to sign'' is determined by $\wt({\bf b}_\rr)$, that is, the Hamming weight of the last $\rr$ coordinates of the column index. 
The latter is in turn precisely captured by $\bZ(m,r)$; see also \eqref{e-p1a} and \eqref{e-eqq2}.

Also with a similar argument, one determines the conjugate of BSSCs and the coordinate-wise multiplication of BSSCs with $H_1\in\cG(m,r_1)$ and $H_2\in \cG(m,r_2)$. Without diving in details, in this case the on-off pattern will be determined by $H_1\cap H_2$ and of course the sparsity will be $r = \dim H_1 \cap H_2$.

In particular, we have proved the following.

\begin{theo}\label{T-SemiGroup}
The set $\bssc$ is closed with respect to coordinate-wise conjugation. The set $\bssc \cup \{{\bf 0}_N\}$ is closed with respect to coordinate-wise multiplication. The set of all BSSCs of given sparsity $r$ and on-off pattern is isomorphic to $\Sym(r;2)$.
\end{theo}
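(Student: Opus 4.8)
The three assertions have already been essentially established in the discussion preceding the statement, so the proof is mostly a matter of assembling the pieces in the right order. The plan is to treat each of the three claims in turn, relying on the explicit formula~\eqref{e-bssc} and on the column-of-a-Clifford description~\eqref{e-eqq2}.

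\medskip
\textbf{Closure under coordinate-wise conjugation.} First I would observe that conjugating $\wb(\ba)$ in~\eqref{e-bssc} simply replaces the exponent $\ba\T\bP^{-\sf T}\bS\bP^{-1}\ba + 2{\bf b}\T\bP^{-1}\ba$ by its negative modulo $4$, while the real factor $f({\bf b},\bP^{-1}\ba,r)/\sqrt{2^r}$ is unchanged. Negating modulo $4$ sends $\bS$ to $-\bS$, which over $\F_2$ equals $\bS$, and negating the linear term $2{\bf b}\T\bP^{-1}\ba$ modulo $4$ leaves it fixed (since $-2\equiv 2 \bmod 4$). Hence $\overline{\wb}$ has exactly the same on-off pattern $H$, the same symmetric matrix $\bS_r$, and the same column index ${\bf b}$ — i.e.\ $\overline{\wb}=\wb$ up to the inevitable global sign, so $\bssc$ is closed under conjugation. (Alternatively one can note that conjugation of a BSSC, viewed through~\eqref{e-eqq2}, corresponds to the symplectic map $\bF\mapsto\ov{\bF}$ which again lies in $\Sp(2m;2)$.)

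\medskip
\textbf{Closure of $\bssc\cup\{{\bf 0}_N\}$ under coordinate-wise multiplication.} Here I would separate cases exactly as in the preceding discussion. If two BSSCs have disjoint on-off patterns, the product is ${\bf 0}_N\in\bssc\cup\{{\bf 0}_N\}$. If the patterns are equal, then — as worked out in~\eqref{e-OnOff} and the sentence following it — the product is, up to a sign captured by $\bZ(m,r)$, a column of $\bG_D(\bP\T)\bG_U(\bS_1+\bS_2)\bG_{\mathbf{\Omega}}(r)$, which by~\eqref{e-eqq2} is a BSSC (with the same $H$, symmetric part $\bS_{1,r}+\bS_{2,r}$, and a suitably shifted column index). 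Finally, if the patterns come from subspaces $H_1\in\cG(m,r_1;2)$ and $H_2\in\cG(m,r_2;2)$ that are neither equal nor disjoint, I would invoke the remark that the resulting nonzero support is governed by $H_1\cap H_2$, so the product is a BSSC of rank $\dim(H_1\cap H_2)$ (again up to a sign). In all cases the product lies in $\bssc\cup\{{\bf 0}_N\}$.

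\medskip
\textbf{The fibre over a fixed $(r,H)$ is isomorphic to $\Sym(r;2)$.} Fix $r$, a subspace $H$ with completion $\bP=\PI$, and a column index ${\bf b}$; then~\eqref{e-p1a} shows the nonzero entries of $\wb$ are $(-1)^{\wt({\bf b}_{m-r})}2^{-r/2}\,i^{\bx\T\bS_r\bx+2{\bf b}_r\T\bx}$ for $\bx\in\F_2^r$. The map $\bS_r\mapsto \wb^{H,\bS_r}$ is then, by the equal-pattern computation above, a group homomorphism from $(\Sym(r;2),+)$ into the set of BSSCs with this pattern under coordinate-wise multiplication, once we mod out the global sign; it is surjective by definition of $\bssc$, and injective because the quadratic form $\bx\mapsto\bx\T\bS_r\bx$ over $\F_2$ determines the symmetric matrix $\bS_r$ (the off-diagonal entry $(\bS_r)_{ij}$ is read off from the mixed term, the diagonal entries from $\bx=\be_i$). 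Hence the set of BSSCs of given sparsity $r$ and on-off pattern is isomorphic to $\Sym(r;2)$.

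\medskip
\textbf{Main obstacle.} The routine parts are the two closure statements; the one place that needs genuine care is the final isomorphism, specifically checking that distinct $\bS_r\in\Sym(r;2)$ really do give distinct BSSCs (not merely distinct up to sign on individual entries) and that the correspondence is a group homomorphism onto a well-defined group operation — i.e.\ making precise the ``up to sign'' in~\eqref{e-OnOff} by absorbing it into the column index ${\bf b}$ via the $\bZ(m,r)$ bookkeeping. That is the step I would write out most carefully.
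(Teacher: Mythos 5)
Your overall plan is the right one, and it is in fact the paper's own: the proof of Theorem~\ref{T-SemiGroup} is precisely the discussion in Section~\ref{Sec-OO} that precedes it (case analysis on disjoint/equal/overlapping on-off patterns via~\eqref{e-OnOff}, with residual signs absorbed into $\bZ(m,r)$ and the column index). However, your argument for closure under conjugation contains a genuine error. The exponent in~\eqref{e-bssc} is evaluated modulo $4$, not modulo $2$, so you may not replace $-\bS$ by $\bS$. Writing $\bx\T\bS_r\bx=\sum_i (\bS_r)_{ii}x_i+2\sum_{i<j}(\bS_r)_{ij}x_ix_j$ as an integer, one finds $-\bx\T\bS_r\bx\equiv \bx\T\bS_r\bx+2\,\diag(\bS_r)\T\bx \pmod 4$, hence
\begin{equation*}
\ov{\wb(\bx)}=\wb(\bx)\cdot(-1)^{\diag(\bS_r)\T\bx},
\end{equation*}
a coordinate-dependent sign, not a global one. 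Concretely, for $m=r=1$, $\bS_r=[1]$, ${\bf b}=0$, the BSSC $\tfrac{1}{\sqrt2}(1,i)$ conjugates to $\tfrac{1}{\sqrt2}(1,-i)$, which is a \emph{different} codeword (the one with ${\bf b}=1$), not $\pm$ the original. Closure under conjugation still holds --- the conjugate is the BSSC with the same $H$ and $\bS_r$ and with ${\bf b}_r$ replaced by ${\bf b}_r+\diag(\bS_r)$ --- but your intermediate claim $\ov{\wb}=\pm\wb$ is false, and the fix is exactly the modulo-$4$ bookkeeping you deferred to the end. (Your parenthetical alternative is also off: $\bF$ is binary, so $\bF\mapsto\ov{\bF}$ is the identity; the relevant fact is that $\ov{\bG_U(\bS)}$ equals $\bG_U(\bS)$ times a diagonal Pauli, which permutes and re-signs columns.)

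A smaller point on the third claim: with $r$, $H$ and the on-off pattern fixed, the BSSCs supported on that pattern are parametrized by $\bS_r\in\Sym(r;2)$ \emph{and} ${\bf b}_r\in\F_2^r$, so there are $2^{r(r+1)/2+r}$ of them, and your map $\bS_r\mapsto\wb^{H,\bS_r}$ (with ${\bf b}$ fixed) cannot be surjective onto that set. The isomorphism is onto the quotient by the $2^r$ column-index shifts, i.e.\ $\Sym(r;2)$ indexes the matrices $\bG_D(\bP\T)\bG_U(\bS)\bG_{\mathbf{\Omega}}(r)$ rather than their individual columns; your injectivity argument (reading $\bS_r$ off the quadratic form, diagonal from $\bx=\be_i$ and off-diagonal from the mixed terms) is correct and is the substantive part, but the surjectivity sentence should be restated at the level of this quotient. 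The closure-under-multiplication part of your proposal matches the paper's argument and is fine, including the deliberate informality about partially overlapping patterns, which the paper itself leaves at the level of ``the support is governed by $H_1\cap H_2$.''
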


\subsection{BSSCs as Grassmannian Line Codebooks}
As discussed, the codebooks $\bssc$ and $\chirp$ are codebooks in $\cG(N,1)$. The cardinalites were determined in Corollary~\ref{C-size}. In order to have a complete comparison, one needs to also consider the relevant metric, which for Grassmannian lines codebooks is the \emph{chordal distance}
\begin{equation}\label{e-chordal}
    \chor(\bw_1,\bw_2) = \sqrt{1 - |\bw_1^\dagger\bw_2|^2}.
\end{equation}
Then the \emph{minimum distance} of a codebook is the minimum over all different codewords.

Fix a BC $\bw_1\in\chirp$ parametrized by $\bS_1 \in \Sym(m)$, and let $\bw_2$ range among $2^m$ BCs parametrized by $\bS_2\in \Sym(m)$. Then~\cite{HCS08,CHJ10,CRCP19}
\begin{equation}\label{e-BCinner}
    |\bw_1^\dagger\bw_2|^2 = \begin{cases} 1/2^r, & 2^r\,\,\text{times},\\ 0, & 2^m-2^r\,\,\text{times}, \end{cases}
\end{equation}
where $r = \rk(\bS_1 - \bS_2)$.
It follows immediately that $|\bw_1^\dagger\bw_2| \leq 1/\sqrt{2}$, and thus the minimum distance of the codebook $\chirp$ is $1/\sqrt{2}$. 

For BSSCs we have the following.
\begin{prop}
The codebook $\bssc$ has minimum distance $1/\sqrt{2}$.
\end{prop}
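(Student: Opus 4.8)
The plan is to show that the minimum distance of $\bssc$ is achieved and equals $1/\sqrt{2}$ by separately bounding $|\bw_1^\dagger \bw_2|$ according to the ranks and on-off patterns of the two BSSCs involved. First I would dispose of the trivial cases: if $\bw_1$ and $\bw_2$ have disjoint on-off patterns, then $\bw_1^\dagger\bw_2 = 0$, so the chordal distance is $1$, which is above $1/\sqrt{2}$. Likewise, since $\chirp \subset \bssc$, the already-established value $1/\sqrt{2}$ from~\eqref{e-BCinner} shows the minimum distance of $\bssc$ is \emph{at most} $1/\sqrt{2}$; the work is to show it cannot be smaller, i.e., that $|\bw_1^\dagger\bw_2| \leq 1/\sqrt{2}$ for every pair of distinct BSSCs.

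The main case is when the two on-off patterns overlap. Suppose $\bw_1$ has rank $r_1$, subspace $H_1$, and $\bw_2$ has rank $r_2$, subspace $H_2$, with (WLOG $r_1 \leq r_2$). By the discussion preceding Theorem~\ref{T-SemiGroup}, the coordinate-wise product $\ov{\bw_1}\odot\bw_2$ is supported on the intersection of the relevant cosets of $H_1$ and $H_2$; when nonempty this support has size $2^{\dim(H_1'\cap H_2')}$ where $H_i'$ are the appropriate cosets. Writing $\bw_1^\dagger\bw_2 = \sum_{\ba} \ov{\bw_1(\ba)}\bw_2(\ba)$, the nonzero terms all have modulus $\frac{1}{\sqrt{2^{r_1}}}\cdot\frac{1}{\sqrt{2^{r_2}}}$, and there are at most $2^{\min(r_1,r_2)} = 2^{r_1}$ of them (since the common support is contained in the on-pattern of $\bw_1$, which has size $2^{r_1}$). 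This gives the crude bound $|\bw_1^\dagger\bw_2| \leq 2^{r_1}\cdot 2^{-(r_1+r_2)/2} = 2^{(r_1-r_2)/2} \leq 1$, which is not yet good enough when $r_1 = r_2$. So the heart of the argument is the equal-rank, equal-on-pattern subcase, where I would invoke the explicit formula: by~\eqref{e-OnOff} and the surrounding analysis, $2^{r}\,\ov{\bw_1(\ba)}\bw_2(\ba)$ is (up to a global sign) the value at $\bx$ of a column of $\bG_D(\bP\T)\bG_U(\bS_1+\bS_2)\bG_{\mathbf{\Omega}}(r)$, that is, of a rank-$r$ \emph{binary chirp} in $2^r$ dimensions. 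Hence $|\bw_1^\dagger\bw_2| = \frac{1}{2^r}\bigl|\sum_{\bx\in\F_2^r}(\text{that BC entry})\bigr|$, which is exactly $2^r$ times the modulus of an inner product of the all-ones-type vector with a binary chirp — equivalently, $|\bw_1^\dagger\bw_2|$ equals the modulus of the inner product of two binary chirps in $2^r$ dimensions (one of them being the trivial chirp $\bS=\mathbf 0$, $\bb{b}=\mathbf 0$). Applying~\eqref{e-BCinner} in dimension $2^r$, this modulus is either $0$ or $2^{-\rho/2}$ where $\rho = \rk(\bS_1 - \bS_2)$; and since $\bw_1\neq\bw_2$ forces $\bS_1\neq\bS_2$ (when all other parameters agree), we get $\rho\geq 1$, hence $|\bw_1^\dagger\bw_2|\leq 1/\sqrt{2}$.

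To finish, I would handle the remaining mixed case $r_1 < r_2$ with overlapping patterns. Here the common on-support has size $2^{r_1}$ (contained in $\bw_1$'s on-pattern), and restricting $\bw_2$ to this support produces, after the change of variables $\bu = \bP_2^{-1}\ba$, again a chirp-type expression; the partial sum over a $2^{r_1}$-dimensional affine subspace of a binary chirp in $2^{r_2}$ dimensions. One can then either directly estimate this partial Gauss sum or, more cleanly, argue via~\eqref{e-eqq2}: both $\bw_1$ and $\bw_2$ are columns of Clifford matrices, so $|\bw_1^\dagger\bw_2|^2$ is an entry of $|\bG_{\bF_1}^\dagger \bG_{\bF_2}|^2$, and $\bG_{\bF_1}^\dagger \bG_{\bF_2}$ is (up to Pauli corrections) a Clifford matrix $\bG_{\bF}$ with $\bF = \bF_1^{-1}\bF_2$; its entries $|\bG_{\bF}(\ba,{\bf b})|$ are $0$ or $2^{-s/2}$ where $2^s$ is the size of the on-pattern of the corresponding BSSC column, and $s = \rk\bC$ for the lower-left block $\bC$ of $\bF$. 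Since $\bw_1\neq\bw_2$, $\bF\notin\cP$ would be the generic situation but even $\bF\in\cP$ forces $s\geq 1$ unless the columns coincide; one checks $s\geq 1$ in all nontrivial cases, giving $|\bw_1^\dagger\bw_2|\leq 1/\sqrt 2$. Combining the three cases, the minimum is exactly $1/\sqrt 2$. The main obstacle I anticipate is making the mixed-rank case fully rigorous — tracking precisely when the cosets of $H_1$ and $H_2$ intersect and verifying the resulting partial Gauss sum never exceeds $2^{r_2-1}$ in modulus; the Clifford-product viewpoint is the cleanest route but requires carefully citing the structure of $|\bG_{\bF}|$ from the earlier sections rather than re-deriving it.
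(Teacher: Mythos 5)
Your proposal is correct and follows essentially the same route as the paper: reduce $\bw_1^\dagger\bw_2$ to the coordinate-wise product of BSSCs (zero or another BSSC, per Theorem~\ref{T-SemiGroup}), note that the on-pattern is a lower-dimensional BC (Remark~\ref{R-emb}), and invoke~\eqref{e-BCinner}; your version simply makes explicit the case analysis that the paper leaves implicit. One small observation: the mixed-rank case you flag as the main obstacle is already settled by your own crude counting bound, since $r_1<r_2$ gives $2^{(r_1-r_2)/2}\le 1/\sqrt{2}$ (and likewise unequal on-off patterns of equal rank give at most $2^{r-1}$ overlapping coordinates, hence a bound of $1/2$), so the only case genuinely requiring~\eqref{e-BCinner} is equal rank with identical on-off pattern.
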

\begin{proof}
As before, it is sufficient to show that $|\bw_1^\dagger\bw_2| \leq 1/\sqrt{2}$ for all $\bw_1,\bw_2\in\bssc$.
By Theorem~\ref{T-SemiGroup}, $\bw_1^\dagger$ is again (the transpose of) a BSSC. 
Then, the inner product $\bw_1^\dagger\bw_2$ is related to the coordinate-wise multiplication of $\bw_1^\dagger$ and $\bw_2$, which as we saw, is either the zero vector or some other BSSC. 
In addition, we know from Remark~\ref{R-emb} that the non-zero entries of BSSCs are lower dimensional BCs. Now the result follows.
\end{proof}
%

The codebooks $\chirp$ and $\bssc$ have the same minimum distance, and by Corollary~\ref{C-size} the latter is 2.384 bigger. Thus, from a coding prospective the codebook $\bssc$ provides a clear improvement. Additionally, we will see next that $\bssc$ can be decoded with similar complexity as $\chirp$. For these reasons, $\bssc$ is an optimal candidate for extending $\chirp$ also from a communication prospective. 
The alphabet of $\chirp$ is $\{\pm 1,\pm i\}$ whereas the alphabet of $\bssc$ is $\{\pm 1,\pm i\}\cup \{0\}$, which is a minimal extension from the implementation complexity prospective.

\begin{cor}\label{C-BCsBSSCs}
Let $\bG_j = \bG_U(\bS_j)\bH_N \in \cl_N$ for $j = 1,2$ and $\bS_j\in \Sym(m;2)$. Then $\bG = \bG_1^\dagger\bG_2$ has sparsity $r$ where $r = \rk(\bS_1+\bS_2)$ and its on-off pattern is determined by $H = \rs(\bS_1+\bS_2)$.
\end{cor}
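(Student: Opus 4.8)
The plan is to transport the entire computation through the map $\Phi$ of~\eqref{e-Phi} into $\Sp(2m;2)$, where the arithmetic becomes linear over $\F_2$, and then read off the sparsity and the on-off pattern from the lower-left block of the resulting symplectic matrix, exactly as Algorithm~\ref{algo:alg1} and Theorem~\ref{T-ES} prescribe.

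First I would note that $\bH_N = \bG_{\mathbf{\Omega}}(m)$, so each $\bG_j = \bG_U(\bS_j)\bG_{\mathbf{\Omega}}(m)$ is, up to an irrelevant scalar, a Clifford matrix of the exact shape $\bG_{\bF}$ appearing in~\eqref{e-eqq2} with $\bP = \bI_m$ and $r = m$; by the discussion in Section~\ref{SS-AS} its columns are precisely the binary chirps parametrized by $\bS_j$, and $\Phi(\bG_j) = \mathbf{\Omega}\,\bF_U(\bS_j)$ (up to the order in which $\Phi$ converts products, which is immaterial for the final answer). Then I would compute $\Phi(\bG) = \Phi(\bG_1^\dagger\bG_2)$ using $\Phi(\bG_1^\dagger) = \Phi(\bG_1)^{-1}$ (Remark~\ref{R-Gd}), the idempotence $\bF_U(\bS)^{-1} = \bF_U(\bS)$ (valid over $\F_2$ since $2\bS = {\bf 0}$), the additivity $\bF_U(\bS_1)\bF_U(\bS_2) = \bF_U(\bS_1+\bS_2)$, and $\mathbf{\Omega}^2 = \bI_{2m}$; a one-line $2\times 2$ block multiplication yields
\begin{equation*}
\Phi(\bG) \;=\; \mathbf{\Omega}\,\bF_U(\bS_1+\bS_2)\,\mathbf{\Omega} \;=\; \fourmat{\bI_m}{{\bf 0}}{\bS_1+\bS_2}{\bI_m}.
\end{equation*}

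Feeding this symplectic matrix into Algorithm~\ref{algo:alg1}, its lower-left block is $\bC = \bS_1+\bS_2$, so the rank stratum is $r = \rk(\bS_1+\bS_2)$ and the on-off pattern is governed by $\cs(\bC)$; since $\bS_1+\bS_2$ is symmetric, $\cs(\bS_1+\bS_2) = \rs(\bS_1+\bS_2) = H$, which is the assertion. To make the last step airtight I would also record that the sparsity and the on-off pattern are invariants of $\bG$ modulo right multiplication by the Hadamard-free group of Section~\ref{S-DC} --- writing $\bG$ in its Bruhat form $\bG_{\bF}\cdot\bG_U(\bS')\bG_D(\bP')$, the trailing factor merely permutes and rescales columns --- so they may legitimately be read from $\Phi(\bG)$, and Theorem~\ref{T-ES} then identifies the columns of the canonical representative $\bG_{\bF}$ as BSSCs of sparsity $r$ with on-off pattern $H$. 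I expect the only genuinely delicate point to be convention bookkeeping: tracking the order in which $\Phi$ turns a product of Clifford matrices into a product of symplectic matrices (so that $\Phi(\bG)$ really comes out as displayed, and not as its transpose), and confirming that under the normalization fixed around~\eqref{e-bmr} it is the lower-left --- not the upper-right --- block that carries the rank and the on-off pattern. As a consistency check, the inner-product formula~\eqref{e-BCinner} already forces every column of $\bG = \bG_1^\dagger\bG_2$ to have exactly $2^{\rk(\bS_1+\bS_2)}$ nonzero entries, which matches the claimed sparsity $r$.
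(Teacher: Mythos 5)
Your proof is correct, but it takes a genuinely different route from the paper's. The paper's proof is two lines: the entries of $\bG_1^\dagger\bG_2$ are exactly the pairwise inner products of the BCs parametrized by $\bS_1$ and $\bS_2$, so the known distribution~\eqref{e-BCinner} (with $\bS_1-\bS_2=\bS_1+\bS_2$ over $\F_2$) immediately gives $2^r$ nonzero entries per column with $r=\rk(\bS_1+\bS_2)$ --- essentially the observation you relegate to a ``consistency check'' at the end. You instead transport everything through $\Phi$, compute
\begin{equation*}
\Phi(\bG)=\mathbf{\Omega}\,\bF_U(\bS_1+\bS_2)\,\mathbf{\Omega}=\fourmat{\bI_m}{{\bf 0}}{\bS_1+\bS_2}{\bI_m},
\end{equation*}
and read the rank and the column space of the lower-left block via Lemma~\ref{L-RightCoset} and Theorem~\ref{T-ES}. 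What your approach buys: it is self-contained within the paper's own machinery (no appeal to the externally cited~\eqref{e-BCinner}), and it delivers the on-off pattern subspace $H=\cs(\bS_1+\bS_2)=\rs(\bS_1+\bS_2)$ explicitly, whereas the paper's one-line appeal to~\eqref{e-BCinner} really only gives the sparsity count directly and leaves the identification of the support with cosets of $\rs(\bS_1+\bS_2)$ implicit. What it costs: you must justify that sparsity and on-off pattern are well defined from $\Phi(\bG)$ alone; you handle the right Hadamard-free factor correctly, but you do not mention the ambiguity from $\ker\Phi=\HW^*$ --- left multiplication by a Pauli $\bD(\ba,\bb)$ translates every column's support by $\ba$. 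This is harmless because the supports of the columns are precisely the cosets of $H$ (cf.~\eqref{e-bmr1} and Section~\ref{Sec-OO}), so the translation only permutes which column carries which coset, but it deserves a sentence. Your flagged worry about the order in which $\Phi$ converts products is real (the paper's convention gives $\Phi(\bG_1\bG_2)=\bF_{\bG_2}\bF_{\bG_1}$, cf.\ the definition of $\bG_\bF$ in Section~\ref{SS-AS}), but as you anticipate it is immaterial here since the final expression is symmetric in $\bS_1,\bS_2$.
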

\begin{proof}
Recall that $\bG_j$ constitutes all the BCs parametrized by $\bS_j$. 
Then the statement follows directly by~\eqref{e-BCinner}.
\end{proof}
\begin{rem}
The vector space of symmetric matrices can be written in terms of chain of nested subspaces, referred in literature as Delsarte-Goethals sets,
\begin{equation}
    \DG(m,0)\subset\DG(m,1) \subset \cdots \subset \DG(m,(m-1)/2)
\end{equation}
with the property that every nonzero matrix in $\DG(m,r)$ has rank at least $m-2r$~\cite{DG75,HKCSS94}. 
For application in deterministic compressed sensing, random access, and quantum computation see~\cite{CJ-LASSO,HCS08,CRCP19}. Since $\DG(m,r)$ is a vector space, it comes with the property that the sum of every two different matrices also has rank at least $m-2r$. 
Thus, for $\bS_1,\bS_2\in \DG(m,(m-r)/2)$, the construction of Corollary~\ref{C-BCsBSSCs} yields a Clifford matrix of sparsity at least $r$. This is an alternative way of creating rank $r$ BSSCs in terms of BCs. However, this will not yield all the BSSCs because not every subspace $H$ is the row/column space of a symmetric matrix $\bS$.
\end{rem}


\section{Reconstruction Algorithms}
\label{Sec-Recon}

In this section we use the rich algebraic structure of BSSCs to construct a low complexity reconstruction/decoding algorithm.
We will build our way up by starting with the reconstruction of a single BSSC. In order to gain some intuition we disregard noise at first.
The problem in hand is to recover $H,\bS_r$, and ${\bf b}$ given a binary subspace chirp $\wb$ as in \eqref{e-bssc}.
In this noiseless scenario, the easiest task is the recovery of the rank $r$. Namely, by \eqref{e-bmr1} we have
\begin{equation}
    \wb(\ba)\ov{\wb(\ba)} = \left\{\!\!\begin{array}{ll} 1 /2^r, & 2^r \text{ times,} \\ 0, & 2^{m-r}\text{ times.}\end{array} \right.
\end{equation}
To reconstruct $\bS_r$ and then eventually $H$ we generalize the \emph{shift and multiply} technique used in~\cite{HCS08} for the reconstruction of binary chirps. 
The underlying structure that enables this generalization is the fact that the on-pattern of BSSC is a BC of lower rank as discussed in Remark~\ref{R-emb}.
However, in our scenario extra care is required as the shifting can perturb the on-off pattern. Namely, we must use only shifts $\ba\longmapsto \ba+\be$ that preserve the on-off pattern. 
It follows by \eqref{e-bmr} that we must use only shifts by $\be$ that satisfy $(\widetilde{\bH_\cI})\T\be = {\bf 0}$, or equivalently $\be = \bH_\cI\by$ for $\by\in \F_2^r$. In this instance, thanks to \eqref{e-PI1} we have 
\begin{equation}
    \bP^{-1}\be = \bP^{-1}\bH_\cI\by = \twomatv{\by}{{\bf 0}}.
\end{equation}
If we focus on the nonzero entries of $\wb$ and on shifts that preserve the on-off pattern of $\wb$ we can make use of Remark \ref{R-emb}, where with another slight abuse of notation we identify $\by$ with $\bP^{-1}\be$. 
It is beneficial to take $\by$ to be ${\bf f}_i$ - one of the standard basis vectors of $\F_2^r$. 
With this preparation we are able to use the shift and multiply technique, that is, shift the given BSSC $\wb$ according to the shift $\bx\longmapsto \bx+{\bf f}_i$ (which only affects the on-pattern and fixes the off-pattern) and then multiply by its conjugate:
\begin{equation}\label{e-shift}
    \wb(\bx+{\bf f}_i)\ov{\wb(\bx)} = \frac{1}{2^r}\cdot i^{\scriptsize {\bf f}\T_{\!\!i}\bS_r{\bf f}_i}\cdot(-1)^{\scriptsize {\bf b}\T_{\!\!r}{\bf f}_i}\cdot(-1)^{\scriptsize \bx\T\bS_r{\bf f}_i}.
\end{equation}
Note that above only the last term depends on $\bx$. 
Now if we multiply \eqref{e-shift} with the Hadamard matrix \eqref{e-Hadamard} we obtain
\begin{equation}\label{e-shift1}
  i^{\scriptsize {\bf f}\T_{\!\!i}\bS_r{\bf f}_i}\cdot(-1)^{\scriptsize {\bf b}\T_{\!\!r}{\bf f}_i}\sum_{\scriptsize \bx\in \F_2^r}(-1)^{\scriptsize \bx\T(\bv+\bS_r{\bf f}_i)},
\end{equation}
for all $\bx\in \F_2^r$ (where we have omitted the scaling factor). Then \eqref{e-shift1} is nonzero precisely when $\bv = \bS_r{\bf f}_i$ - the $i$th column of $\bS_r$. With $\bS_r$ in hand, one recovers ${\bf b}_r$ similarly by multiplying $\wb(\bx)\ov{\bw_{\scriptsize {\bf 0}}(\bx)}$ with the Hadamard matrix. To recover ${\bf b}_{m-r}$ one simply uses the knowledge of nonzero coordinates and \eqref{e-p1a1}. Next, with ${\bf b}$ in hand and the knowledge of the on-off pattern one recovers $\bH_\cI$ (and thus $H$) using \eqref{e-bmr} or equivalently \eqref{e-bmr1}. We will refer to the process of finding the column index ${\bf b}$ as \emph{dechirping}.

In the above somewhat ad-hoc method we did not take advantage of the geometric structure of the subspace chirps as eigenvectors of given maximal stabilizers or equivalently as the columns of given Clifford matrices. We do this next by following the line of \cite{TC19}.

Let $\bw$ be a subspace chirp as in \eqref{e-bssc}, and recall that it is the column of $\bG:=\bG_{\scriptsize{\bF}} = \bG_D(\bP\T) \bG_U(\bS)\bG_{\mathbf{\Omega}}(r)$ where $\bF:=\bF_{\mathbf{\Omega}}(r)\FUS\bF_D(\bP\T)$. Then by construction $\bG$ and $\bF$ satisfy $\bG^\dagger\bE(\bc)\bG = \pm\bE(\bc\T\bF^{-1})$ for all $\bc\in \F_2^{2m}$. Recall also from Theorem \ref{T-ES} that $\bG$ is the common eigenspace of the maximal stabilizer 
\begin{equation}\label{e-ES1}
    \bE(\Imr\bP\T,\,(\Imr\bS+\Imrr)\bP^{-1}\!) = \bE\!\left(\!\left[\!\begin{array}{c|c} \!\bH_{\cI}\!\!\!^{\T}\!&\!\bS_r\bI_{\cI}^{\T}\!\! \\ \!{\bf 0}&\!\widetilde{\bH_{\cI}}^{\sf T}\!\!\end{array} \!\right] \!\right).
\end{equation}
Thus, to reconstruct the unknown subspace chirp $\bw$, it is sufficient to first identify the maximal stabilizer that stabilizes it, and then identify $\bw$ as a column of $\bG$. The best way to accomplish the latter task, dechirping that is, is as described above, and thus we focus only on the former task. 
A crucial observation at this stage is the fact that the maximal stabilizer in \eqref{e-ES1} has precisely $2^r$ off-diagonal and $2^{m-r}$ diagonal Pauli matrices; see also Corollary~\ref{C-SZN}.

We now make use of the argument in Theorem \ref{T-ES}, that is, $\bw$ is an eigenvalue of $\bE(\bc)$ iff $\bE(\bc\T\bF^{-1})$ is diagonal. 
Let us focus first on identifying the $2^{m-r}$ diagonal Pauli matrices that stabilize $\bw$, that is, $\bc = \left[\!\!\begin{array}{c} \mathbf{0}\\\by \end{array}\!\!\right]$.
First we see that
\begin{equation}\label{e-FI}
    \bF^{-1} = \left[\!\!\begin{array}{cccc} \bI_{\cI}\bS_r & \widetilde{\bH_{\cI}} & \bI_{\cI} & {\bf 0} \\ \bH_{\cI} & {\bf 0}&{\bf 0}&\bI_{\widetilde{\cI}}\end{array}\!\!\right].
\end{equation}
Then for such $\bc$, $\bw$ is an eigenvector of $\bE(\bc)$ iff $\by\T\bH_{\cI} = 0$ iff $\by = \widetilde{\bH_{\cI}}\bz$ for some $\bz\in\F_2^{m-r}$. 
Thus, to identify the diagonal Pauli matrices that stabilize $\bw$, and consequently the subspaces $\bH_{\cI}, \widetilde{\bH_{\cI}}$, it is sufficient to find $2^{m-r}$ vectors $\by\in \F_2^{m}$ such that 
\begin{equation}\label{e-abs}
   0 \neq \bw^\dagger \bE({\bf 0},\by)\bw = \bw^\dagger\bE({\bf 0}, \wHI\bz)\bw.
\end{equation}
It follows by \eqref{e-Eab1} that the above is equivalent with finding $2^{m-r}$ vectors $\by$ such that 
\begin{equation}
   0 \neq \sum_{\scriptsize \bv\in\F_2^m}(-1)^{\scriptsize \by\T\bv}|\bw(\bv)|^2 = \sum_{\scriptsize \bv\in\F_2^m}(-1)^{\scriptsize \bz\T\widetilde{\bH}\T\bv}|\bw(\bv)|^2.
\end{equation}
The above is just a Hadamard transform which can be efficiently undone.

With a similar argument, $\bw$ is an eigenvector of a general Pauli matrix $\bE(\bx,\by)$ iff 
\begin{equation}\label{e-shift2}
  \bw^\dagger\bE(\bx,\by)\bw = i^{\scriptsize \bx\T\by}\sum_{\scriptsize \bv\in\F_2^m}(-1)^{\scriptsize \bv\T\by}\ov{\bw(\bv+\bx)}\bw(\bv) \neq 0.
\end{equation}
The above is again just a Hadamard transform. 
In fact, we see here both the ``shift'' (by $\bx$), the ``multiply'', and the Hadamard transform of the ``shift and multiply''. This is the main insight that transfers the shift and multiply technique of~\cite{HCS08} to computation with Pauli matrices. 
By definition, the Pauli matrix $\bE(\bx,\by)$ has a diagonal part determined by $\by$ and an off-diagonal part determined by $\bx$. 
The off-diagonal part of a Pauli determines the shift of coordinates whereas the diagonal part takes care of the rest.

Computing $\bw^\dagger \bU\bw$ for a generic $N\times N$ matrix is expensive, and even more so if the same computation is repeated $N^2$ times. 
However, when $\bU$ is a Pauli matrix, which is a monomial matrix of sparsity/rank 1, the same computation is much faster. 
Moreover, as we will see, for a rank $r$ BSSC we need not compute all the possible $N$ shifts but only $r$ of them.
This is an intuitive observation based on the shape of the maximal stabilizer \eqref{e-ES1}.
Indeed, once the diagonal Pauli matrices are identified, one can use that information to search the off-diagonal Pauli matrices only for $\bx\in \cs(\HI)$, which reduces the search from $2^m$ to $2^r$.
In fact, as we will see, instead of $2^r$ shifts we will need only use the $r$ shifts determined by columns of $\HI$.

Let us now explicitly make use of \eqref{e-shift2} to reconstruct the symmetric matrix $\bS_r$, while assuming that we have already reconstructed $\bH_{\cI}, \widetilde{\bH_{\cI}}$. 
In this case, as we see from~\eqref{e-FI}, the only missing piece of the puzzle is the upper-left block of $\bF^{-1}$.
We proceed as follows. For $\bc = \left[\!\!\begin{array}{c} \bx\\\by \end{array}\!\!\right]$, we have $\bw^\dagger\bE(\bx,\by)\bw\neq 0$ iff $\bE(\bc\T\bF^{-1})$ is diagonal, iff
\begin{equation}\label{e-SrRec}
    \bx\T[\bI_{\cI}\bS_r \,\,\,\, \widetilde{\bH_{\cI}}] = \by\T[\bH_{\cI}\,\,\,\,{\bf 0}].
\end{equation}
As before, we are interested in $\by\in\F_2^m$ that satisfy \eqref{e-SrRec}. 
First note that solutions to \eqref{e-SrRec} exist only if $\bx\T\widetilde{\bH_{\cI}} = {\bf 0}$, that is only if $\bx = \bH_{\cI}\bz$, $\bz\in \F_2^r$.
For such $\bx$, making use of \eqref{e-PI1}, we conclude that \eqref{e-SrRec} holds iff 
\begin{equation}\label{e-SrRec1}
    \bz\T\bS_r = \by\T\bH_{\cI}.
\end{equation}
Solutions of \eqref{e-SrRec1} are given by
\begin{equation}\label{e-SrRec2}
    \by = \widetilde{\bH_{\cI}}\bv + \bI_{\cI}\bS_r\bz,\,\,\,\,\bv\in\F_2^{m-r}.
\end{equation}
If we take $\bz = {\bf f}_i$ - the $i$th standard basis vector of $\F_2^r$ - we have that $\bz\T\bS_r$ is the $i$th row/column of $\bS_r$ while $\bx = \bH_{\cI}\bz$ is the $i$th column of $\bH_{\cI}$.

We collect all these observations in Algorithm~\ref{alg}.

\begin{algorithm}\caption{Reconstruction of single noiseless BSSC}\label{alg}
{\bf Input:} Unknown BSSC $\bw$
\begin{algorithmic}
\STATE~1. Compute $\bw^\dagger \bE({\bf 0},\by)\bw$ for $\by\in \F_2^m$.
\STATE~2. Find $\bH_{\cI}$ using 
\[
    \bw^\dagger \bE({\bf 0},\by)\bw \neq 0 \text{ iff }\by\T\bH_{\cI} = {\bf 0} \text{ iff } \by\in \cs(\widetilde{\bH_{\cI}}).
\]
\vspace{-0.2 in}
\STATE~3. Construct $\PI$ as in~\eqref{e-PI}.
\STATE~4. $r = \rk(\bH_{\cI})$. 
\STATE~5. $\bb{for}$ $i = 1,\ldots,r$ do:
\STATE~6. \quad Compute $\bw^\dagger \bE(\bH_{\cI}{\bf f}_i,\by)\bw$ for $\by\in\F_2^m$.
\STATE~7. \quad Determine the $i$th row of $\bS_r$ using \eqref{e-SrRec2}.
\STATE~8. {\bf end for}
\STATE~9. Dechirp $\bw$ to find ${\bf b}$.
\end{algorithmic}
{\bf Output:} $r,\bS_r,\PI,{\bf b}$.
\end{algorithm}

\subsection{Reconstruction of Single BSSC in the Presence of Noise}
In order to move towards a multi-user random access scenario, one needs a reliable reconstruction algorithm of noisy BSSCs. For this we consider a signal model
\begin{equation}
    \bs = \bw + \bn,
\end{equation}
where $\bn$ is Additive White Gaussian Noise (AWGN).
In such instance, the subspace reconstruction, that is, step (2) of Algorithm~\ref{alg} is a delicate procedure. However, one can proceed as follows. For each $\by\in\F_2^m$ we compute $\bs^\dagger\bE({\bf 0},\by)\bs$ and use it as an estimate of $\bw^\dagger\bE({\bf 0},\by)\bw$. 
We sort these scattered real values in decreasing order and make rank hypothesis, that is, for each $0\leq r \leq m$ we select $2^{m-r}$ largest values and proceed with Algorithm~\ref{alg} to obtain $\bw_r$. 
We then select the best rank using the Euclidean norm, that is,
\begin{equation}
    \widetilde{\bw} = \arg\min_r \|\bs - \bw_r\|_2.
\end{equation}

\begin{figure}[ht!]
\begin{center}
\includegraphics[width = 0.5\textwidth]{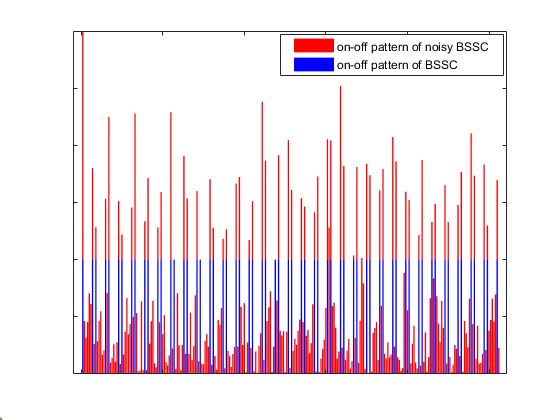}
\caption{On-off pattern of noisy BSSC versus on-off pattern of noiseless BSSC.} 
\label{fig1}
\end{center}
\end{figure}

In Figure~\ref{fig1} we see an instance of a rank $r=2$ BSSC on-off pattern in $N = 2^8$ dimensions, with and without noise. 
In this case $\bw^\dagger\bE({\bf 0},\by)\bw$ is non-zero $2^{m-r} = 64$ times. 
In this instance, only 94\% of the 64 highest $\bs^\dagger\bE({\bf 0},\by)\bs$ values of the noisy version match the on-off pattern of $\bw$. 
However, this can be overcame in recusntruction by using the fact that the on-off pattern is determined by a subspace. 
Thus one can build up $\widetilde{\bH_{\cI}}$ in a greedy manner by starting with the highest values and then including linear combinations. 
This strategy was tested in~\cite{TC19} with Monte-Carlo simulations
yielding low error rates even for low Signal-to-Noise Ratio (SNR); see~\cite[Fig.~1]{TC19}. There it was observed that, rather remarkably, BSSCs outperform BCs despite having the same minimum distance.



\section{Multi-BSSC Reconstruction}
\label{Sec-multiBSSC}

The strategy of noisy single BSSC reconstruction can be used as a guideline to generalize Algorithm~\ref{alg} to decode multiple simultaneous transmissions in a block fading multi-user scenario 
\begin{equation}\label{e-linBSSC}
    \bs = \sum_{\ell = 1}^L h_\ell\bw_\ell + \mathbf{n}.
\end{equation}
Here the channel coefficients $h_\ell$ are $\cC\cN(0,1)$, with neither
phase nor amplitude known, and $\bw_\ell$ are BSSCs. Noise
$\mathbf{n}$ may be added, depending on the scenario. This model
represents, e.g., a random access scenario, where $L$ randomly chosen
active users transmit a signature sequence, and the receiver should
identify the active users. In such application, the channel gain is
not known at the receiver, and thus one cannot use the amplitude to
transmit information. For this reason, the amplitude/norm is assumed,
without loss of generality, to be one. Additionally, the channel phase
is also not known at the receiver and should not carry any
information. Thus without loss of generality, the codewords can be
assumed to come from a Grassmannian codebook, such as $\chirp$ or
$\bssc$.

We generalize the single-user 
algorithm to a multi-user algorithm, where the coefficients $h_\ell$ are estimated to identify the most probable transmitted signals. For this, we use Orthogonal Matching Pursuit (OMP), which is analogous with the strategy of~\cite{HCS08}. We assume that we know $L$. 
\begin{algorithm}\caption{Reconstruction of noiseless multi-BSSCs}\label{alg1}
{\bf Input:} Signal~$\bs$ as in~\eqref{e-linBSSC}.
\begin{algorithmic}
\STATE~1. \textbf{for} $\ell=1:L$ do
\STATE~2. \quad \textbf{for} $r = 0:m$ do
\STATE~3. \qquad Greedily construct the $m-r$ dimensional subspace \\ 
\hspace{.425 in} $\wHI$ using the highest values of $|\bs^\dagger\bE({\bf 0},\by)\bs|$.
\STATE~4. \qquad Estimate $\widetilde{\bw}_r$ as in Alg.~\ref{alg}.
\STATE~5. \quad \textbf{end for} 
\STATE~6. \quad Select the best estimate $\widetilde{\bw}_\ell$.
\STATE~7. \quad Determine $\widetilde{h}_1,\ldots,\widetilde{h}_\ell$ that minimize
$$\left\|\bs - \sum_{j = 1}^\ell h_j\widetilde{\bw}_j\right\|_2.$$
\STATE~8. \quad Reduce $\bs$ to $\bs' = \bs - 
 \sum_{j = 1}^\ell \widetilde{h}_j\widetilde{\bw}_j$.
 \STATE~9. \textbf{end for}
\end{algorithmic}
\textbf{Output: }$\widetilde{\bw}_1,\ldots, \widetilde{\bw}_L$.
\end{algorithm}
\begin{figure}
\centering
\includegraphics[width = 0.5\textwidth]{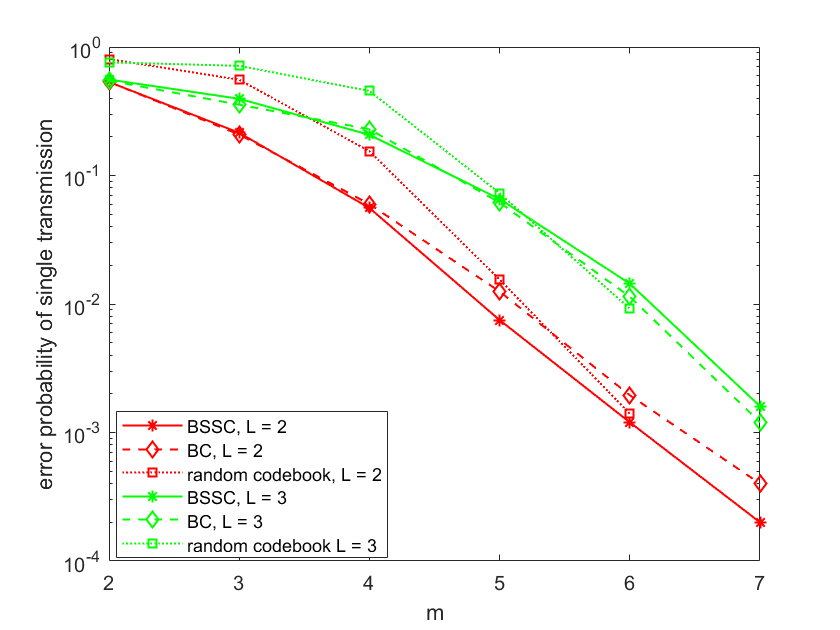}
\caption{Error probability of Algorithm~\ref{alg1} on absence of noise. Random codebook included for comparison.} 
\label{fig}
\end{figure}

The estimated error probability of single user transmission for $L = 2,3$ is given in Figure~\ref{fig}. For the simulation, the rank $r$ is selected in a weighted manner, according to the relative size of rank $r$ BSSCs (recall that there are $2^m\cdot{m\choose r}_2\cdot 2^{r(r+1)/2}$ rank $r$ BSSCs). 
Whereas, within a given rank, BSSCs are chosen uniformly. 
We compare the results with BC codebooks and random codebooks with the same cardinality. For random codebooks, steps (2)-(5) of Algorithm~\ref{alg1} are substituted with exhaustive search (which is infeasible is beyond $m = 6$).

\begin{figure}
\begin{center}
\includegraphics[width = 0.5\textwidth]{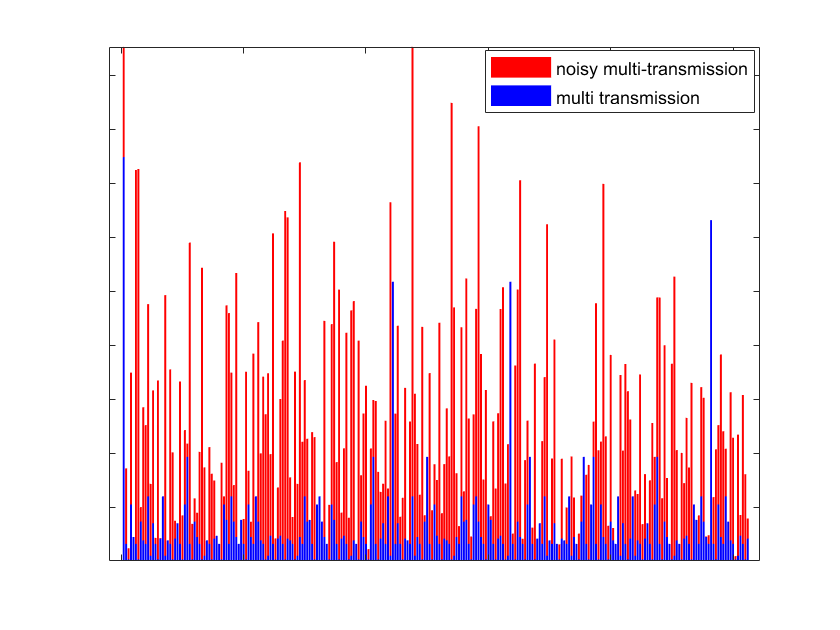}
\caption{On-off pattern of a noiseless vs. noisy linear combination of BSSCs.} 
\label{mBSSC}
\end{center}
\end{figure}

The erroneous reconstructions of Algorithm~\ref{alg1} come in part from steps (3)-(4). Specifically, from the cross-terms of 
\begin{equation*}
    \bs^\dagger\bs = \sum_{\ell = 1}^L|h_\ell|^2\|\bw_\ell\|^2 + \sum_{i\neq \ell} \ov{h_i}h_\ell\bw_i^\dagger\bw_\ell.
\end{equation*}
For BCs, these cross-terms are the well-behaved \emph{second order Reed-Muller functions}. 
The BSSCs, unlike the BCs~\cite{CHJ10}, do not form a group under point-wise multiplication (Theorem~\ref{T-SemiGroup}), and thus the products $\bw_i^\dagger\bw_\ell$ are more complicated. 
Indeed, when two BSSCs of different ranks and/or different on-off patterns are multiplied coordinate-wise (which we do during the ``shift and multiply'') the resulting BSSCs could be potentially very different (if not zero) as described in Theorem~\ref{T-SemiGroup}.
In addition, linear combinations of BSSCs~\eqref{e-linBSSC} may perturb the on-off patterns of the constituents, and depending on the nature of the channel coefficients $h_\ell$, the algorithm may detect a higher rank BSSC in $\bs$. 
If the channel coefficients of two 
BSSCs happen to have similar amplitudes, the algorithm may detect a lower rank BSSC that corresponds to the overlap of the on-off patterns of the BSSCs. These phenomena are depicted in Figure~\ref{mBSSC} (in blue) in which the on-off pattern of a linear combination of a rank two, a rank three, and a rank six BSSCs in $N = 2^8$ dimensions is displayed.
There, we see multiple levels (in blue) of $\bs^\dagger\bE({\bf 0},\by)\bs$, only some of which correspond to actual on-off patterns $\bw_\ell^\dagger\bE({\bf 0},\by)\bw_\ell$ of the given BSSCs, and the rest corresponds to different combinations of overlaps. The problems for multi-BSSC reconstruction caused by these phenomena is alleviated by the fact that most BSSC codewrods have high rank. E.g., as $m$ grows, it follows by Corollary~\ref{C-size} that about $42\%$ of BSSCs are BCs. Low rank BSSCs are very unlikely in~\eqref{e-linBSSC}.

Despite these phenomena affecting BSSC on-off patterns in multi-BSSC
scenarios, a
decoding algorithm like the one discussed is able to distinguish
different levels and provide reliable performance. It is worth
mentioning that by comparing Figure~\ref{fig} with~\cite[Fig.~1]{TC19}
we see that the interference of BSSCs is much more benign than general
AWGN, which in turn explains the reliable reconstruction of noiseless
multi-user transmission.

Interestingly, even in this multi-user scenario, we see that BSSCs
outperform BCs.
With increasing $m$, the performance benefit of the algebraically
defined codebook over random codebooks diminishes. However, the
decoding complexity remains manageable for the algebraic codebooks.

In~\cite{TC19} it was demonstrated that reconstruction of a single noisy
BSSC was possible even for low SNR. We have performed preliminary
simulations and have tested Algorithm~\ref{alg1} on a noisy multi-user
transmission. Unlike the single BSSC scenario, the multi BSSCs
scenario requires a higher SNR regime for reliable performance. In
Figure~\ref{mBSSC} we have shown (in red) $|\bs^\dagger\bE({\bf
  0},\by)\bs|$ for a noisy version of the same linear combination as
before (displayed in blue). In this instance we have fixed SNR = 8 dB.
A close look shows that this scenario is
different from the single user scenario displayed in
Figure~\ref{fig1}. In this instance, even an exhaustive search over
ranks $r$ as in Algorithm~\ref{alg1} produces an on-off pattern that
matches at most $61\%$ \emph{any} actual on-off pattern, and thus the
subspace reconstruction inevitably fails. On the other hand, if the
on-off pattern is reconstructed correctly, then the corresponding
$r$-dimensional BC can be reconstructed reliably. When noise is on
manageable level, reliable reconstruction of multi-user BSSCs is
possible with Algorithm~\ref{alg1}. In Figure~\ref{mBC}, we depict the
performance of $N=256$ BSSC and BCs in a scenario with SNR 30 dB, for
a varying number of simultaneously transmitting users. Again, we see
that BSSCs provide slightly better error performance than BCs, despite
the codebook being larger.

\begin{figure}
\begin{center}
\includegraphics[width = 0.5\textwidth]{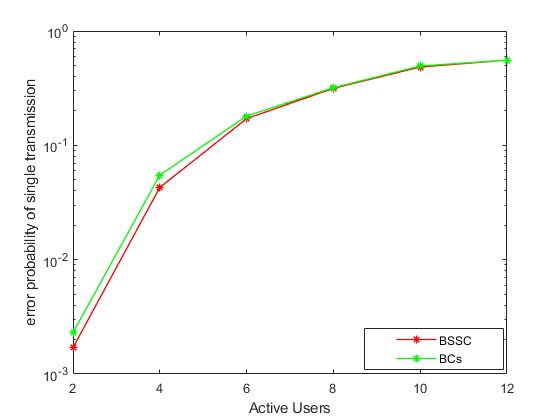}
\caption{Error probability of Algorithm~\ref{alg1} for noisy multi-user transmission in $N = 256$ dimensions and SNR = 30 dB.} 
\label{mBC}
\end{center}
\end{figure}


\section{Conclusions and Future Research}

Algebraic and geometric properties of BSSCs are described in details. 
BSSCs are characterized as common eigenspaces of maximal sets of commuting Pauli matrices, or equivalently, as columns of Clifford matrices. 
This enables us to fully exploit connections between symplectic geometry and quantum computation, which in turn yield considerable complexity reductions. 
Further, we have developed a low complexity decoding algorithm for multi BSSCs transmission with low error probability. 

By construction, BSSCs inherit all the desirable properties of BCs, while having a higher cardinality. In wireless communication scenarios BSSCs exhibit slightly lower error probability than BCs. For these reasons we think that BSSCs constitute good candidates for a variety of applications.

Algorithm~\ref{alg1} is a generalization of the BC decoding algorithm of~\cite{HCS08} to BSSCs. As pointed out in~\cite{Calderbank2019}, the decoding algorithm of~\cite{HCS08} does not scale well in a multi-user scenario, in terms of the number of users supported as a function of codeword length. 
In~\cite{Calderbank2019,massiveGuo} slotting arrangements were added on top of BC codes to increase the length, and the number of supported users. Part of the information in a transmission is embedded in the choice of a BC, part in the choice of active slots.
In~\cite{Calderbank2019}, interference cancellation across slots is applied, and the discussed scheme can be considered a combination of physical layer (PHY) BC coding, and a Medium Access Control (MAC) Layer code of the type discussed in~\cite{Liva2011}.  The works of~\cite{Calderbank2019,massiveGuo} show that following such principles, practically implementable massive random access schemes, operating in the regime of interest of~\cite{Polyanskiy17}, can be designed. If the small-$m$ BC-transmissions in the slots would be replaced with BSSC transmissions with the same $m$, the results of this paper indicate that performance per slot would be the same, if not slightly better than in~\cite{Calderbank2019,massiveGuo}. This indicates that combined MAC/PHY codes, where BSSC would be the PHY component instead of BC as used in~\cite{Calderbank2019,massiveGuo}, are likely to provide slightly higher rates with otherwise similar performance as~\cite{Calderbank2019,massiveGuo}. In future work, we plan to investigate such codes.

As mentioned, we have seen in all our simulations that BSSCs outperform BCs. Although our algorithms do not find the closest codeword, this may be due to a fact that BSSCs have fewer closest neighbors on average than BCs. We will investigate this in future work with a statistical analysis of Algorithm~\ref{alg1} along the lines of~\cite{CHJ10}.

Binary chirps have been generalized in various works to prime dimensions, and recently to non-prime dimensions~\cite{Pitaval20}. In future work we will consider analogues generalizations of BSSCs, by adding a sparsity component to generalized BCs and/or by lifting BSSCs modulo $2t$. 

As a byproduct, we have obtained a Bruhat decomposition of the symplectic group that involves five elementary symplectic matrices (compared to the seven layers of~\cite{Can18}, c.f., \eqref{e-trung}). We think that this has implications in quantum computation. In future research we will explore whether Algorithm~\ref{algo:alg1} can be leveraged to improve upon~\cite{MR18,Koenig-jmp14}.


\section*{Acknowledgements}
The work of TP and OT was funded in part by the Academy of Finland (grant 319484). The work of RC was supported in part by the Air Force Office of Scientific Research (grant FA 8750-20-2-0504). The authors would like to thank Narayanan Rengaswamy for helpful discussions.

\bibliographystyle{IEEEtran}
\bibliography{IEEEabrv,BSSC}

\end{document}